%
%
%
%

\documentclass[a4paper]{article} 

\usepackage[ruled]{algorithm2e}

\SetAlFnt{\small}
\SetAlCapFnt{\small}
\SetAlCapNameFnt{\small}
\SetAlCapHSkip{0pt}
\IncMargin{-\parindent}

\usepackage{amsthm,amssymb,mathtools,stmaryrd,tikz,subfigure,hyperref}

\usepackage{fullmacros}

\newcommand{\altpath}[1]{\mathrm{AltPath}(#1)}
\newcommand{\seq}[1]{\mathbf{#1}}
\newcommand{\pushed}[2][n]{{\tt push}^{#1}_{#2}}
\newcommand{\pair}{{\tt pair}}

\renewcommand{\word}[1]{{\mathtt #1}}

\renewcommand{\twwprojects}{\cond{\sharp W}_{2}}
\renewcommand{\graphtw}[1]{W_{\word{#1}}}
\renewcommand{\graphingtw}[1]{W_{\word{#1}}}
\renewcommand{\graphtw}[1]{\bar{W}_{\word{#1}}}
\renewcommand{\ListType}{\cond{Nat}_{2}}
\renewcommand{\reptw}[1]{{\tt Gp}(\word{#1})}

\renewcommand{\repany}[1]{\mathbf{Rep}(\word{#1})}

\newcommand{\pred}[1]{\mathbf{Pred}(\microcosm{#1})}
\newcommand{\ListTypeELL}{{\rm BList}}

\newcommand{\measbig}[1]{\left\llbracket #1 \right\rrbracket}

\newcommand{\lin}{{\tt inl}}
\newcommand{\rin}{{\tt inr}}

\newcommand{\icc}{{\sc icc}\xspace}
\newcommand{\GoI}{{\sc goi}\xspace}
\newcommand{\masas}{{\sc masas}\xspace}

\begin{document}

\title{Interaction Graphs: Nondeterministic Automata}
\author{Thomas Seiller \\University of Copenhagen}

\maketitle

\begin{abstract}
 This paper exhibits a series of \emph{semantic} characterisations of sublinear nondeterministic complexity classes. These results fall into the general domain of logic-based approaches to complexity theory and so-called \emph{implicit computational complexity} (\icc), i.e.\ descriptions of complexity classes without reference to specific machine models. In particular, it relates strongly to \icc results based on linear logic since the semantic framework considered stems from work on the latter. Moreover, the obtained characterisations are of a geometric nature: each class is characterised by a specific action of a group by measure-preserving maps.
\end{abstract}

\section{Introduction}

 

Complexity theory is concerned with the study of how many resources are needed to perform a specific computation or to solve a given problem. The study of \emph{complexity classes} -- sets of problems which need a comparable amount of resources to be solved, lies at the intersection of mathematics and computer science. After the obtention of strong impossibility results \cite{naturalproofs} preventing the use of known proof methods to settle open separation problems, mathematicians have tried to give characterisations of complexity classes that differ from the original machine-bound definitions, hoping to enable methods from radically different areas of mathematics. 

Among them, the field of Implicit Computational Complexity (\icc) aims at studying algorithmic complexity only in terms of restrictions of languages and computational principles.
It has been established since Bellantoni and Cook' landmark paper \cite{bellantonicook}, and following work by Leivant and Marion \cite{leivantmarion1,leivantmarion2}.
Amongst the different approaches to \icc, several results were obtained by considering syntactic restrictions of \emph{linear logic} \cite{ll}, a refinement of intuitionnistic logic which accounts for the notion of resources. Linear logic introduces a modality $\oc$ marking the \enquote{possibility of duplicating} a formula $A$: the formula $A$ shall be used exactly once, while the formula $\oc A$ can be used any number of times. Modifying the rules governing this modality then yields variants of linear logic having computational interest: this is how constrained linear logic systems, for instance \BLL \cite{BLL} and \ELL \cite{danosjoinet}, are obtained.

Recently, a new line of research emerged, providing semantic characterisations of complexity classes instead of syntactical ones. 
This approach was initiated by Girard  \cite{normativity} and motivated by his work on Geometry of Interaction (\GoI) models, and more precisely the hyperfinit \GoI model \cite{goi5}. Together with C. Aubert, the author showed how Girard's proposal lead to the characterisation of \coNLogspace
 \cite{seiller-phd,seiller-conl} and \Logspace  \cite{seiller-lsp}. Unfortunately, technical reasons lead the authors to consider modifications of the initial hyperfinite \GoI framework, furthering characterisation results from the \GoI models construction. In other words, although originating from considerations on semantics, these results were not directly logic-related.

These semantic results were then rephrased in more syntactic terms, providing new characterisations related to logic programming results \cite{seiller-LPLS,seiller-ptime} but taking another step further from the initial framework of the hyperfinite \GoI model. After a first step which ended in the loss of an underlying logical framework, this second step ended in the loss of the rich mathematical theories the method was initially founded upon. 
Although this recent line of work have its own interests, it is the author's belief that one should not forget the mathematical structure from which these characterisations originated. 
This sentiment is strengthened by the author's discovery of a correspondence between fragments of linear logic and a classification of maximal abelian subalgebras (\masas) of von Neumann algebras \cite{seiller-masas}. 
The approach taken in this paper is therefore quite orthogonal to the recent evolutions of the subject, as it aims at the obtention of a deeper understanding of how complexity classes can be related to the mathematics behind \GoI models in order to provide complexity theorists with new techniques and invariants \cite{seiller-towards}.

\subsection*{Contributions and Outline}

The present work achieves three distinct goals related to the logic-based characterisations of complexity classes. Firstly, complexity classes are here characterised as specific types in models of (fragments of) linear logic. It thus fills the gap between the above mentioned series of work \GoI-inspired results in computational complexity \cite{seiller-conl,seiller-lsp,seiller-LPLS,seiller-ptime} and the actual semantics provided by \GoI models. Secondly, we obtain characterisations of several classes that were not available using previous techniques. This is due to a change of perspective which allows new proof techniques, sensible to more subtle differences. Thirdly, each complexity class considered is here characterised by a specific \emph{group action} on a measure space. This hints at possible uses of mathematical invariants from ergodic theory and measurable group theory in the context of computational complexity.

The paper is constructed as follows. The next section introduces the technical material about \emph{interaction graphs models of linear logic}. This will allow us to define, in section 3, the ambient model which will be used to obtain the characterisations. We also define the representation of binary words and the notion of $\microcosm{m}_{k}$-machine. Section 4 contains the technical proof of the characterisation: after recalling the definition of multihead automata, we show how the complexity class captured by $k$-head multihead automata and the one captured by our notion of $\microcosm{m}_{k}$-machine coincide. Lastly, we discuss this result in the conclusion, providing both a logic and a geometric reading of it.

\section{Interaction Graphs Models}

\subsection{Basic Definitions}\label{subsec:basics}

Interaction graphs models were introduced by the author in a series of papers \cite{seiller-goim,seiller-goiadd,seiller-goig,seiller-goie,seiller-goif}. It is a modular framework providing a rich hierarchy of models of (fragments of) linear logic. We describe here the basic operations needed to work out the following section. Proofs are interpreted as a generalisation of graphs, named \emph{graphings}. Graphings can be understood as graphs \emph{realised} on a measured space, i.e. vertices are measurable subsets of the space, and edges represents measurable functions mapping the source subset to the target subset. As part of the modularity of the framework, we use the notion of \emph{microcosm} to restrict the set of measurable maps the edges of the graphing considered can represent.

For technical reasons explained in earlier papers \cite{seiller-goig}, all measurable maps cannot be used to represent edges. To be able to define models of linear logic, one has to restrict to non-singular measurable-preserving maps. We recall that a map $f$ is \emph{non-singular transformation} if it is a measurable map such that $\mu(f(A))=0$ if and only if $\mu(A)=0$. We say $f$ is \emph{measurable-preserving} when $f(A)\in\tribu{B}$ whenever $A\in\tribu{B}$. 

\begin{definition}[Microcosm]\label{microcosm}
Given a measure space $\measure{X}=(\space{X},\tribu{B},\mu)$, a \emph{microcosm} on $\measure{X}$ is a set $\microcosm{m}$ of non-singular measurable-preserving transformations $\measure{X}\rightarrow\measure{X}$ which has the structure of a monoid w.r.t. the composition of maps.
\end{definition}

In practice, we define microcosms by providing a set of generating maps; this defines a unique microcosm, namely the smallest microcosm containing all given maps. 

\begin{examples}\label{examplemicrocosms}
For all examples considered in this section, we will restrict to the underlying measure space the real line $\realN$ endowed with the Lebesgue measure. We first define the microcosm $\microcosm{z}$ as the set of all integral translations on $\realN$, i.e. 
$$\microcosm{z}=\{T_{k}: \realN\rightarrow\realN, x\mapsto x+k~|~ k\in\integerN \}.$$
Notice that this microcosm is generated by the set $\{T_{1},T_{-1}\}$.

Now, we can also define the microcosm $\microcosm{h}$ of integral homotheties on $\realN$, i.e. 
$$\microcosm{h}=\{H_{z}: \realN\rightarrow\realN, x\mapsto z.x~|~ z\in\integerN \}.$$
For this microcosm, no finite generating set exists. The following (infinite) set is however generating: $\{H_{p}~|~ (-p)\text{ is prime or equal to $1$}\}$.

These two microcosms are almost disjoint, as only the identity map on $\realN$ belongs to both of them. They are however submonoids of several common microcosms; in particular there exists a minimal such microcosm, namely the monoid of all integral affine transformations, i.e.
$$\microcosm{aff}=\{A_{k,h}: \realN\rightarrow\realN, x\mapsto h.x+k~|~ k,h\in\integerN \}.$$

Finally, all microcosms on a measure space $\measure{X}$ are submonoids of \emph{the largest microcosm on $\measure{X}$} -- called the \emph{macrocosm} -- defined as the set of all non-singular measurable-preserving transformations on $\measure{X}$.
\end{examples}

We must point out that a more general notion of microcosm was introduced in a recent work by the author \cite{seiller-goif}; the restricted notion defined here is however easier to grasp and sufficient for our purposes in this paper. We now define the notion of graphing. 

\begin{definition}[Graphing representative]
We fix a measure space $\measure{X}$, a microcosm $\microcosm{m}$, a monoid $\Omega$, a measurable subset $V^{G}$ of $\measure{X}$, and a finite set $D^{G}$. A ($\Omega$-weighted) \emph{$\microcosm{m}$-graphing representative} $G$ of \emph{support} $V^{G}$ and \emph{dialect} $D^{G}$ is a countable set 
$$\{(S^{G}_{e},\ttrm{i}^{G}_{e},\ttrm{o}^{G}_{e},\phi^{G}_{e},\omega^{G}_{e})~|~e\in E^{G}\},$$
where $S^{G}_{e}$ is a measurable subset of $V^{G}\times D^{G}$, $\phi^{G}_{e}$ is an element of $\microcosm{m}$ such that $\phi_{e}^{G}(S^{G}_{e})\in V^{G}$, $\ttrm{i}^{G}_{e},\ttrm{o}^{G}_{e}$ are elements of $D^{G}$, and $\omega^{G}_{e}\in\Omega$ is a \emph{weight}. We will refer to the indexing set $E^{G}$ as the set of \emph{edges}. For each edge $e\in E^{G}$ the set $S^{G}_{e}\times\{\ttrm{i}^{G}_{e}\}$ is called the source of $e$, and we define the \emph{target} of $e$ as the set $T^{G}_{e}\times\{\ttrm{o}^{G}_{e}\}$ where $T^{G}_{e}=\phi^{G}_{e}(S^{G}_{e})$.
\end{definition}

To provide some intuitions, we first ignore the dialect $D^{G}$, or equivalently we consider $D^{G}$ to be a singleton. Given an edge $e\in E^{G}$, the intuition is that the triple $(S^{G}_{e},\phi^{G}_{e},\omega^{G}_{e})$ corresponds to the following information: the source $S^{G}_{e}$ of the edge, the target $T^{G}_{e}:=\phi^{G}_{e}(S^{G}_{e})$ of the edge, the weight $\omega^{G}_{e}$ of the edge. Consequently, a graphing may be mapped to a graph whose edges are measurable subsets of $\measure{X}$. However, two different graphings may give rise to the same graph, as this mapping forgets about \emph{how} the source is mapped to the target, i.e. which measurable map in the microcosm realises the edge. 

The additional information of the elements $\ttrm{i}^{G}_{e},\ttrm{o}^{G}_{e}$ corresponds to \emph{control states}. Indeed, thinking of the finite set $D^{G}$ as a set of control state is a good intuition that can be followed through this paper. Building on this, one can define a weighted automata from a graphing as follows: the automata works on the (infinite) input alphabet consisting of all measurable subsets of $\measure{X}$ and has as set of states $D^{G}$; then each edge $e$ defines a transition from $S^{G}_{e}$ in state $\ttrm{i}^{G}_{e}$ to $T^{G}_{e}$ in state $\ttrm{o}^{G}_{e}$. This mapping, however, is again non-injective as it does not account for \emph{how} the source is mapped to the target.

\begin{examples}\label{examplegraphingsrep}
We first consider an example of \emph{deterministic graphing representative}, i.e. one such that every $x\in \measure{X}$ belongs to the source of at most one edge (up to a null measure set). For the sake of simplicity, the graphing representatives $F,G$ we consider are such that $D^{F}=D^{G}=\{\star\}$, i.e. they have a unique control state, and all weights will be equal to $1$; they are then defined by $V^{F}=V^{G}=[0,2[$ and 
$$ F=\{([0,1[,\star,\star,x\mapsto x+1,1),([1,2[,\star,\star,x\mapsto x-1,1)\} $$
$$ G=\{([0,1[,\star,\star,x\mapsto x+1,1),([1,2[,\star,\star,x\mapsto 2-x,1)\} $$
Note that these two examples give rise to the same graph and the same automata through the mapping just explained above. They are however quite different. In particular, using the notations of \autoref{examplemicrocosms}, the graphing $F$ is a $\microcosm{t}$-graphing while $G$ is not. Indeed, $G$ is neither a $\microcosm{t}$-graphing or a $\microcosm{h}$-graphing; it is however a $\microcosm{aff}$-graphing.
\end{examples}

Even though the intuitions given above are good to keep in mind, they are only approximations of the actual notion of graphing. Indeed, a graphing is defined as an equivalence class of graphing representatives. In particular, a graphing is not a specific set of edges realised by elements of a given microcosms: it is the generalised measurable dynamical system underlying a specific representation. In particular, both intuitions of graphings as graphs and automata fail to convey this idea that we now illustrate on an example.

\begin{examples}\label{examplegraphings}
We consider the graphing $F$ defined in \autoref{examplegraphingsrep}. We define the graphing $H$ defined by $D^{H}=\{\star\}$, $V^{H}=[0,2[$, and
$$ H = \{(]0,1/2[,\star,\star,x\mapsto x+1,1),(]1/2,1[,\star,\star,x\mapsto x+1,1),([1,2[,\star,\star,x\mapsto x-1,1)\} $$
The notion of graphing should be such that $F$ and $H$ are representative of the same graphing. To understand this, consider the graphing $H'$  defined by $D^{H'}=\{\star\}$, $V^{H'}=[0,2[$, and
$$ H' = \{([0,1/2[,\star,\star,x\mapsto x+1,1),([1/2,1[,\star,\star,x\mapsto x+1,1),([1,2[,\star,\star,x\mapsto x-1,1)\} $$
Then $H'$ is a \emph{refinement} of $H$ in that we only replaced the edge $([0,1[,\star,\star,x\mapsto x+1,1)$ by the two edges $([0,1/2[,\star,\star,x\mapsto x+1,1)$ and $([1/2,1[,\star,\star,x\mapsto x+1,1)$ to define $H'$ from $F$. Moreover, $H'$ is \emph{almost-everywhere equal} to $H$.
\end{examples}

As illustrated by the example, it is natural to identify graphing representatives w.r.t. almost-everywhere equality and a notion of \emph{refinement}, both combined in the following formal definition which is studied in earlier work \cite{seiller-goig}. 

\begin{definition}
A graphing representative $F$ is a refinement of a graphing representative $G$ if there exists a partition\footnote{We allow the sets $E^{F}_{e}$ to be empty.} $(E^{F}_{e})_{e\in E^{G}}$ of $E^{F}$ such that: 
$$
\forall e\in E^{G}, \cup_{f\in E^{F}_{e}} S^{F}_{f} =_{a.e.} S^{G}_{e};
\hspace{1cm} 
\forall e\in E^{G}, \forall f\neq f'\in E^{F}_{e},  \mu(S^{F}_{f} \cap  S^{F}_{f'})=0;
$$
$$
\forall e\in E^{G}, \forall f\in E^{F}_{e}, \omega^{F}_{f}=\omega_{e}^{G}
\hspace{1cm} 
\forall e\in E^{G}, \forall f\in E^{F}_{e}, \phi^{F}_{f}=\phi_{e}^{G}
$$
\end{definition}

Then two graphing representatives are \emph{equivalent} if and only if they possess a common refinement. The actual notion of \emph{graphing} is then an equivalence class of the objects just defined w.r.t. this equivalence. Since all operations considered on graphings were shown to be compatible with this quotienting \cite{seiller-goig}, i.e.\ well defined on the equivalence classes, we will in general make no distinction between a graphing -- as an equivalence class -- and a graphing representative belonging to this equivalence class.

\subsection{Paths and Execution}\label{subsec:execution}

In previous work, the author showed how to build denotational models of \emph{types}, or \emph{formulas}, by using graphings (over a space $\measure{X}$ chosen once and for all) to interpret \emph{programs}, or \emph{proofs}, depending on which side of the proofs-as-program correspondence we are standing on. These denotational models should be described as \emph{dynamic}, as they represent program execution, or the cut-elimination procedure, as a non-trivial operation in the semantics. In that aspect, they are distinguished from so-called \emph{static} denotational models in which a proof and its normal form have the same \enquote{denotation}. In the specific models built from graphings, the dynamic aspect is represented by the operation of \emph{execution}, based on the computation of alternating paths.

An \emph{alternating} path between two $\microcosm{m}$-graphings $F, G$ is a sequence of edges $\pi=e_{1}, e_{2},\dots, e_{k}$ verifying the following two conditions:
\begin{itemize}[nolistsep,noitemsep]
\item $e_{i}$ in $E^{G}$ if and only if $e_{i+1}\in E^{F}$, and $\ttrm{o}_{e_{i}}=\ttrm{i}_{e_{i+2}}$;
\item every measurable set $(\phi_{e_{i}}\circ \phi_{e_{i-1}}\circ \dots \phi_{e_{1}})(S_{e_{1}})$ is of strictly positive measure\footnote{In particular, $S_{e_{i+1}}\cap T_{e_{i}}$ is non-negligible.}.
\end{itemize} 
We denote $\altpath{F,G}$ the set of such paths. A given path naturally represents the composition $\phi_{\pi}=\phi_{e_{k}}\circ\dots\circ\phi_{e_{1}}$ which belongs to $\microcosm{m}$ since the latter is a monoid. We define the source of $\pi$ as $S_{\pi}\times\{\ttrm{i}_{e_{1}},\ttrm{i}_{e_{2}}\}$, where $S_{\pi}$ is defined as the set of all $x$ such that for all $i$, $\phi_{e_{i}}\circ\dots\circ\phi_{e_{1}}(x)\in S_{e_{i+1}}$. The weight $\omega_{\pi}$ of the path is defined as $\prod_{i=1}^{k}\omega_{e_{i}}$.

Given a path $\pi$ and a measurable subset $C$, we define $[\pi]_{o}^{o}(C)$ as the path representing the same map as $\pi$, and whose source has been restricted to $S^{\decoupe C}\times\{\ttrm{i}_{e_{1}},\ttrm{i}_{e_{2}}\}$, with $S_{\pi}^{\decoupe C}=(S_{\pi}\cap \bar{C}\cap (\phi_{\pi})^{-1}(\bar{C}))$ where $\bar{C}$ is the complement set of $C$. Intuitively, we restrict $\pi$ to the subset of its domain that lies outside of $C$ and is mapped outside of $C$ by the map $\phi_{\pi}$. The execution between graphings $F$, $G$ of respective supports $V+C$ and $W+C$ is then defined as the graphing $F\plug G$ of support $V+W$ consisting of all $[\pi]_{o}^{o}(C)$ for $\pi$ an alternating path between $F$ and $G$.

\begin{definition}[Execution]
Let $F$ and $G$ be graphings such that $V^{F}=V\disjun C$ and $V^{G}=C\disjun W$ with $V\cap W$ of null measure. Their \emph{execution} $F\plug G$ is the graphing of support $V\disjun W$ and dialect $D^{F}\times D^{G}$ defined as the set of all $[\pi]_{o}^{o}(C)$ where $\pi$ is an alternating path between $F$ and $G$.
$$ F\plug G=\{(S^{\decoupe C}_{\pi},(\ttrm{i}_{e_{1}},\ttrm{i}_{e_{2}}),(\ttrm{o}_{e_{n-1}},\ttrm{o}_{e_{n}}),\phi_{\pi},\omega_{\pi})~|~\pi= e_{1},e_{2},\dots,e_{n}\in\altpath{F,G}\} $$
\end{definition}

\begin{examples}
Consider the two graphings $F$ and $G$ shown in \autoref{figureexec} ($F$ is shown at the top of the figure; $G$ at the bottom). Their execution is then the graphing with the following countably infinite family of paths $\{a(db)^{k}ec\}_{k=0}^{\infty}$, where $a(db)^{k}ec$ is of source $[(2^{k-1}-1)/2^{k-1},(2^{k}-1)/2^{k}]$ and realised by the function $x\mapsto 2^{k}x-2^{k}+6$.
\end{examples}

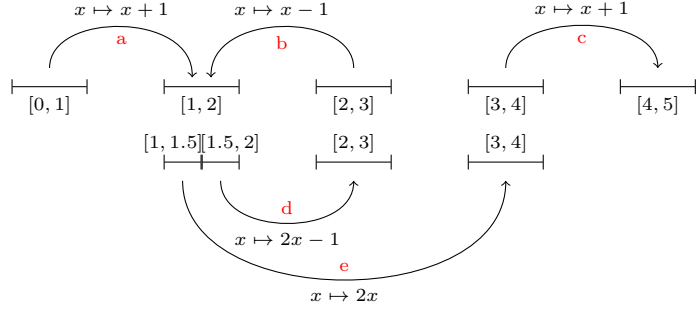
\begin{figure}
\centering
\begin{tikzpicture}
	\draw[|-|] (0,0) -- (1,0) node [midway,below] {\scriptsize{$[0,1]$}} node [midway,above] (1) {};
	\draw[|-|] (2,0) -- (3,0) node [midway,below] {\scriptsize{$[1,2]$}} node [midway,above] (2) {};
	\draw[|-|] (4,0) -- (5,0) node [midway,below] {\scriptsize{$[2,3]$}} node [midway,above] (3) {};
	\draw[|-|] (6,0) -- (7,0) node [midway,below] {\scriptsize{$[3,4]$}} node [midway,above] (4) {};
	\draw[|-|] (8,0) -- (9,0) node [midway,below] {\scriptsize{$[4,5]$}} node [midway,above] (5) {};
	
	\draw[->] (1) .. controls (0.5,1) and (2.4,1) .. (2.west) node [midway,below] {\scriptsize{\textcolor{red}{a}}} node [midway,above] {\scriptsize{$x\mapsto x+1$}};
	\draw[->] (3) .. controls (4.5,1) and (2.6,1) .. (2.east) node [midway,below] {\scriptsize{\textcolor{red}{b}}} node [midway,above] {\scriptsize{$x\mapsto x-1$}};
	\draw[->] (4) .. controls (6.5,1) and (8.5,1) .. (5) node [midway,below] {\scriptsize{\textcolor{red}{c}}} node [midway,above] {\scriptsize{$x\mapsto x+1$}};

	\draw[|-|] (2,-1) -- (2.5,-1) node [near start,above] {\scriptsize{$[1,1.5]$}} node [midway,below] (-2) {};
	\draw[|-|] (2.5,-1) -- (3,-1) node [near end,above] {\scriptsize{$[1.5,2]$}} node [midway,below] (-25) {};
	\draw[|-|] (4,-1) -- (5,-1) node [midway,above] {\scriptsize{$[2,3]$}} node [midway,below] (-3) {};
	\draw[|-|] (6,-1) -- (7,-1) node [midway,above] {\scriptsize{$[3,4]$}} node [midway,below] (-4) {};
	
	\draw[->] (-25) .. controls (2.75,-2) and (4.5,-2) .. (-3) node [midway,above] {\scriptsize{\textcolor{red}{d}}} node [midway,below] {\scriptsize{$x\mapsto 2x-1$}};
	\draw[->] (-2) .. controls (2.25,-3) and (6.5,-3) .. (-4) node [midway,above] {\scriptsize{\textcolor{red}{e}}} node [midway,below] {\scriptsize{$x\mapsto 2x$}};

\end{tikzpicture}
\caption{Two graphings F (above) and G (below).\label{figureexec}}
\end{figure}

Execution represents the cut-elimination procedure or, through the proofs-as-programs correspondence, the execution of programs. Together, graphings and execution provide dynamic semantics for proofs / programs. 

\subsection{Orthogonality and Models}

The second step in defining interaction graphs models consists in building an interpretation of types as (particular) sets of proof interpretations. This construction builds on a particular case of the (tight, orthogonality) double-glueing construction defined by Hyland and Schalk \cite{doubleglueing}. 



We first define the \emph{measurement} between two graphings, and then use it to define a binary relation between graphings -- the \emph{orthogonality}. Although the definition of measurement is quite involved in the general case \cite{seiller-goig}, it will be enough for our purpose to consider:
\begin{itemize}[nolistsep,noitemsep] 
\item its restriction to measure-preserving maps; 
\item a fixed parameter map $m:\Omega\rightarrow \realposN\cup\{\infty\}$;
\end{itemize}
The measurement is defined by a sum over all \emph{circuits} between two graphs $F$ and $G$. A \emph{circuit} is an equivalence class of cycles w.r.t. cyclic permutations. The actual sum is computed by a choice of representatives of circuits, i.e. a set $\repcirc{F,G}$ of cycles $\pi=(e_{i})_{i=0}^{n-1}$ such that (1) $\pi^{k}\in\repcirc{F,G}$ ($k$ a non-zero integer) implies $\pi\in\repcirc{F,G}$; (2) $\pi\in\repcirc{F,G}$ implies $(e_{i+k})_{i=0}^{n-1}\not\in\repcirc{F,G}$ (where $i+k$ is computed in $\integerN/ n\integerN$). More details about the definition, and a proof that the considered measurement is independent from this choice of representative is found in previous work by the author \cite{seiller-goig}.

\begin{definition}
The measurement between two graphings (realised by measure-preserving maps) is defined as 
$$\meas[]{F,G}=\sum_{\pi\in\repcirc{F,G}} \int_{\supp{\pi}} \frac{m(\omega(\pi)^{\rho_{\phi_{\pi}}(x))}}{\rho_{\phi_{\pi}}(x)}d\lambda(x)$$
where $\rho_{\phi_{\pi}(x)}=\inf\{n\in\naturalN~|~ \phi_{\pi}^{n}(x)=x\}$ (by convention, $\inf\emptyset=\infty$).
\end{definition}

%

We now describe the models. For technical reasons explained in previous papers \cite{seiller-goim,seiller-goiadd}, a proof is interpreted as a pair of a real number and a formal weighted sum of graphings of a fixed support -- a \emph{sliced graphing}. The measurement and the execution are extended to these objects as follows:
$$\measbig{(a,\sum_{i\in I}\alpha_{i}A_{i}),(b,\sum_{j\in J}\beta_{j}B_{j})}=a\left(\sum_{j\in J}\beta_{j}\right)+b\left(\sum_{i\in I}\alpha_{i}\right)+\sum_{(i,j)\in I\times J}\alpha_{i}\beta_{j}\meas[]{A_{i},B_{j}}$$
$$(a,\sum_{i\in I}\alpha_{i}A_{i})\plug(b,\sum_{j\in J}\beta_{j}B_{j})=\left(\measbig{(a,\sum_{i\in I}\alpha_{i}A_{i}),(b,\sum_{j\in J}\beta_{j}B_{j})},\sum_{(i,j)\in I\times J}\alpha_{i}\beta_{j}A_{i}\plug B_{j}\right)$$

\begin{definition}
A project of support $V$ is a pair $(a,A)$ of a real number $a$ and a finite formal sum $A=\sum_{i\in I} \alpha_{i}A_{i}$ where for all $i\in I$, $\alpha_{i}\in\realN$ and $A_{i}$ is a graphing of support $V$.
\end{definition}

\begin{definition}
Two projects $(a,A)$ and $(b,B)$ are orthogonal -- written $(a,A)\poll{}(b,B)$ -- when they have equal support and $\meas[]{(a,A),(b,B)}\neq 0,\infty$. We also define the orthogonal of a set $E$ as $E^{\pol}=\{(b,B): \forall (a,A)\in A, (a,A)\poll{}(b,B)\}$ and write $E^{\pol\pol}$ the double-orthogonal $(E^{\pol})^{\pol}$.
\end{definition}

Based on this orthogonality relation, we can define the notion of \emph{conducts} and \emph{behaviours} which are the interpretations of types in the models. 

\begin{definition}
A \emph{conduct} of support $V^{A}$ is a set $\cond{A}$ of projects of support $V^{A}$ such that $\cond{A}=\cond{A}^{\pol\pol}$. A \emph{behaviour} is a conduct such that whenever $(a,A)$ belongs to $\cond{A}$ (resp. $\cond{A}^{\pol}$) and for all $\lambda\in\realN$, then $(a,A+\lambda\emptyset)$ belongs to $\cond{A}$ (resp. $\cond{A}^{\pol}$) as well. If both $\cond{A}$ and $\cond{A}^{\pol}$ are non-empty, we say $\cond{A}$ is \emph{proper}.
\end{definition}

Conducts provide a model of Multiplicative Linear Logic. The connectives $\otimes,\multimap$ are defined as follows: if $\cond{A}$ and $\cond{B}$ be conducts of disjoint supports $V^{A}, V^{B}$, i.e. $V^{A}\cap V^{B}$ is of null measure, then:
\[
\begin{array}{rcl}
\cond{A\otimes B}&=&\{\de{a\plug b}~|~\de{a}\in\cond{A},\de{b}\in\cond{B}\}^{\pol\pol}\\
\cond{A\multimap B}&=&\{\de{f}~|~\forall \de{a}\in\cond{A},\de{f\plug b}\in\cond{B}\}
\end{array}
\]
However, to define additive connectives, one has to restrict the model to behaviours. In this paper, we will deal almost exclusively with proper behaviours. Based on the following proposition, we will therefore consider mostly projects of the form $(0,L)$ which we abusively identify with the underlying sliced graphing $L$. Moreover, we will use the term \enquote{behaviour} in place of \enquote{proper behaviour}.

\begin{proposition}[{\cite[Proposition 60]{seiller-goiadd}}]
If $\cond{A}$ is a proper behaviour, $(a,A)\in\cond{A}$ implies $a=0$.
\end{proposition}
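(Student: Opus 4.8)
The plan is to argue by contraposition, showing that if $a \neq 0$ then the project $(a,A)$ cannot belong to any proper behaviour, because it would be orthogonal to \emph{every} project of the same support, contradicting the existence of a nonempty $\cond{A}^{\pol}$ together with the behaviour closure condition. The key leverage is the first term of the measurement formula: for two projects $(a,A)$ and $(b,B)$, the measurement contains the contribution $a\left(\sum_{j\in J}\beta_{j}\right) + b\left(\sum_{i\in I}\alpha_{i}\right)$ coming from the real-number components, added to the graphing-interaction term $\sum_{(i,j)}\alpha_{i}\beta_{j}\meas[]{A_{i},B_{j}}$. The idea is that when $a\neq 0$, one can exploit the behaviour's defining closure under adding $\lambda\emptyset$ to manufacture, from a single orthogonal witness, a whole family of would-be orthogonal projects whose measurements are forced to take a forbidden value ($0$ or $\infty$).

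First I would fix a proper behaviour $\cond{A}$ with $(a,A)\in\cond{A}$ and, using properness, pick some $(b,B)\in\cond{A}^{\pol}$, so that $\meas[]{(a,A),(b,B)}\neq 0,\infty$ by orthogonality. Next I would use the behaviour closure property on the orthogonal side: since $(b,B)\in\cond{A}^{\pol}$, the project $(b, B+\lambda\emptyset)$ also lies in $\cond{A}^{\pol}$ for every $\lambda\in\realN$, and hence is orthogonal to $(a,A)$. The empty graphing $\emptyset$ contributes nothing to any interaction term (it has no edges, so it generates no circuits), so adding $\lambda\emptyset$ only shifts the total weight-mass $\sum_j \beta_j$ on the $B$-side by $\lambda$. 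Writing $S_{A}=\sum_{i\in I}\alpha_i$ and $S_{B}=\sum_{j\in J}\beta_j$, the measurement against $(b,B+\lambda\emptyset)$ becomes an affine function of $\lambda$ of the form $a(S_B + \lambda) + b\,S_A + \sum_{(i,j)}\alpha_i\beta_j\meas[]{A_i,B_j}$, i.e.\ $a\lambda + (\text{constant})$.

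The crux of the argument is then purely arithmetical: if $a\neq 0$, this affine function $\lambda\mapsto a\lambda + c$ is non-constant and therefore takes the value $0$ for exactly one choice $\lambda_0 = -c/a$. For that $\lambda_0$ the project $(b,B+\lambda_0\emptyset)$ lies in $\cond{A}^{\pol}$ yet fails to be orthogonal to $(a,A)\in\cond{A}$, which is the desired contradiction. Hence $a=0$. The main obstacle I anticipate is verifying precisely the claim that adding $\lambda\emptyset$ leaves all genuine interaction contributions untouched and modifies the measurement only through the $a\left(\sum_j\beta_j\right)$ term — this requires checking that the empty graphing produces no new circuits and carries no weight in $\repcirc{A_i,\emptyset}$, so that the measurement formula degenerates exactly to the linear shift described above. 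Once that bookkeeping is settled, the contradiction is immediate from the one-point vanishing of a nonzero affine map.
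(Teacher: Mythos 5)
Your proof is correct and is essentially the argument behind the cited Proposition 60 of \cite{seiller-goiadd} (the present paper only quotes the statement): one uses properness to get a witness $(b,B)\in\cond{A}^{\pol}$, the behaviour closure to slide it to $(b,B+\lambda\emptyset)$, notes that $\emptyset$ contributes no circuits so the measurement is the affine function $\lambda\mapsto a\lambda+\meas[]{(a,A),(b,B)}$ with finite nonzero constant term, and kills orthogonality at $\lambda_{0}=-\meas[]{(a,A),(b,B)}/a$ when $a\neq 0$. No gaps.
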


Finally, let us mention the fundamental theorem for the interaction graphs construction in the restricted case we just exposed\footnote{The general construction allows for other sets of weights as well as whole families of measurements \cite{seiller-goig}.}.

\begin{theorem}[{\cite[Theorem 1]{seiller-goig}}]\label{thm:mallmodels}
For any microcosm $\microcosm{m}$, the set of behaviours provides a model of Multiplicative-Additive Linear Logic (\MALL) without multiplicative units.
\end{theorem}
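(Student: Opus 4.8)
The plan is to deduce the theorem from the Hyland--Schalk double-glueing machinery \cite{doubleglueing}, so that the work reduces to establishing that graphings under execution form an adequate \emph{base} structure and that the orthogonality is compatible with it. Concretely, everything hinges on two facts about execution and measurement, which together form the \emph{associativity} (or \emph{trefoil}) property of the framework. First, execution is associative: whenever the supports match up, $(F\plug G)\plug H$ and $F\plug(G\plug H)$ are equal graphings. Second, the measurement satisfies an adjunction, $\measbig{F\plug G,H}=\measbig{F,G\plug H}$ (with the appropriate support conventions), together with the additive behaviour of the measurement already recorded in its extension to projects. Once these are in place, the orthogonality $\poll{}$ behaves like a genuine pairing and the biorthogonal closure $\cond{A}=\cond{A}^{\pol\pol}$ turns conducts into the closed objects of a \mbox{$*$-autonomous}-style structure.

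The first step, and the one I expect to be the main obstacle, is the associativity/trefoil property itself. I would prove it at the level of representatives by exhibiting a weight- and measure-preserving bijection between the alternating paths relevant to the two bracketings. An alternating path in $(F\plug G)\plug H$ decomposes, after unfolding the paths of $F\plug G$ produced by the execution, into an alternating path in the three-way interaction of $F$, $G$, $H$; the converse reassembles such a path. One must check that (i) the maps $\phi_{\pi}$ composed along corresponding paths coincide, (ii) the weights $\omega_{\pi}$, being products along the path, agree, and (iii) the source sets, after the restriction operation $[\pi]_o^o(C)$ removing the portions that re-enter the cut support, match up to null sets. The genuine difficulty is measure-theoretic bookkeeping: the positive-measure condition defining admissible paths, the interplay of the two successive restrictions with the middle support, and the fact that equality of graphings is only modulo refinement and almost-everywhere equality. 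I would handle this by choosing a common refinement fine enough that each path factors cleanly, then verifying that the integrand of the measurement (the $\rho_{\phi_{\pi}}$-weighted sum over circuits) is preserved, which is where the independence of the measurement from the choice of circuit representatives is used.

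With the trefoil property secured, the multiplicative structure follows formally. The De Morgan duality built into orthogonality gives $\cond{A\multimap B}=(\cond{A}\otimes\cond{B}^{\pol})^{\pol}$, associativity and commutativity of $\otimes$ follow from associativity and commutativity of $\plug$ (modulo biorthogonal closure), and the adjunction $\measbig{F\plug G,H}=\measbig{F,G\plug H}$ yields the currying isomorphism that makes $\multimap$ the internal hom, hence a model of multiplicative linear logic without units. For the additives I would pass to behaviours and define the connectives using disjoint sums of supports; the defining closure of a behaviour under adding $\lambda\emptyset$ is exactly what forces the additive slices to glue into products and coproducts and guarantees the required distributivity. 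Finally, soundness is checked rule by rule: the axiom is interpreted by an identity graphing (a \enquote{fax}), cut by execution --- whose good behaviour is precisely the trefoil property --- and each logical rule by the corresponding operation, membership in the relevant conduct being guaranteed by $\cond{A}^{\pol\pol}=\cond{A}$. This establishes that behaviours model \MALL\ without multiplicative units.
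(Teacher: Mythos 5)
The paper does not actually prove this theorem: it is imported verbatim from \cite{seiller-goig} (cited as Theorem 1 there), so there is no internal proof to compare against; your outline is a faithful reconstruction of the strategy of the cited proof --- associativity of execution and the trefoil property as the central technical lemma, established by a measure-preserving bijection of alternating paths up to refinement and null sets, then double glueing for the multiplicative structure and the behaviour condition (closure under adding $\lambda\emptyset$) for the additives. The only nuance worth flagging is that the trefoil property is a cyclic three-term identity carrying correction terms such as $\meas{F,G}$, and it is precisely the wager component of projects that absorbs these and yields the clean adjunction $\measbig{\de{f}\plug\de{g},\de{h}}=\measbig{\de{f},\de{g}\plug\de{h}}$ you invoke --- a point your reference to the extension of the measurement to projects implicitly covers.
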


This theorem can be refined, as the set of conducts provides a model of Multiplicative Linear Logic (\MLL), although multiplicative units are \emph{not} behaviours. Moreover \MALL is only the minimal fragment one can expect to model, and one can define models which interpret second-order quantification \cite{seiller-goig} as well as exponential connectives \cite{seiller-goie,seiller-goif}.

\section{Integers, Machines, Tests}

We will now define a specific model that will be studied throughout the rest of the paper. After defining the underlying measure space, we will define a family of microcosms. The largest of those microcosms will be used to define the model with which we will work -- the surrounding universe. We will start by showing that this is a model Elementary Linear Logic (\ELL), a logic fragment expressive enough to define a representation of binary words. The smaller microcosms $\microcosm{m}_{i}$ will be used to define submodels of this surrounding universe which will characterise small complexity classes.

Since the resulting model is of Elementary Linear Logic (\ELL), one can represent binary words using the type of binary lists in \ELL. The corresponding proofs can then be interpreted as graphings (or rather as projects $(0,G)$ with $G$ a graphing), but a single proof can be interpreted as a myriad of graphings depending on choice in the interpretation's definition. Consequently, an \ELL proof representing a binary word will be interpretable by many different graphings. Those graphings, however, are all obtained as representations of the same graph, corresponding to the set of axiom rules in the corresponding proof net. We refer the reader to an earlier paper for an illustrated discussion of how binary words can be represented as graphs \cite{seiller-conl}; we define in the next section the interpretation of binary words directly. 

Once the type of binary words is defined, one can consider the type of binary predicates in the model. Among those graphings realising this type, we consider only the \emph{finite} ones, i.e. those that can be described by a finite number of edges. These objects are called \emph{machines}, and can be further classified according to the monoids of measurable maps used to realise their edges. This leads to a notion of $\microcosm{m}$-machine for a microcosm $\microcosm{m}$ which is a submonoid of $\microcosm{p}$. In a way, we are therefore defining subsets of the type of predicates in a model of \ELL. However, let us recall that each such submonoid $\microcosm{m}$ describes a model of \MALL (at least); consequently another reading of this is to understand $\microcosm{m}$-machines as finite graphings in the type of predicates of a smaller model described by $\microcosm{m}$. In particular, these models are not complete w.r.t. \MALL and should satisfy additional axioms. Since these models characterise small complexity classes, one could try to derive from these models logical systems describing (space) sub-linear complexity classes.

\subsection{General Situation}

Notice that while previous work (and the previous section) defined graphing with weights in an arbitrary monoid $\Omega$, we here fix $\Omega$ as $[0,1]\times\{0,1\}$ with usual multiplication on the unit interval and the product on $\{0,1\}$. To simplify notations, we write elements of the form $(a,0)$ as $a$ and elements of the form $(a,1)$ as $a\cdot \mathbf{1}$. On this set of weights, we will consider the fixed parameter map $m(x,y)=xy$ in the following.

In practice, most graphings considered in this paper do not use weights different from $1$ (i.e. $(1,0)$), except for the \emph{tests} (\autoref{def:tests}). We will therefore allow ourselves to define graphing representatives without mentioning the weights, implying that those are all equal to $1$.

Moreover, graphings were shown equivalent w.r.t. dialect-renaming, i.e. if $G$ is obtained from $F$ by renaming the dialect then $F$ and $G$ are \emph{universally equivalent} \cite{seiller-goie}, i.e. indistinguishable in the model. Formally, this is expressed as the fact that for every graphing $H$, the measurement $\meas{F,H}$ coincides with the measurement $\meas{G,H}$. Consequently, we will always consider in the following that dialects are chosen as initial segments of the natural numbers, i.e. sets $[n]=\{0,1,\dots,n\}$.

\begin{definition}[The Space]
We will work on the measure space $\measure{X}=\integerN\times[0,1]^{\naturalN}$ considered with its usual Borel $\sigma$-algebra and Lebesgue measure.
\end{definition}

Borrowing the notation introduced in earlier work \cite{seiller-goif}, we denote by $(x,\seq{s})$ the points in $\measure{X}$, where $\seq{s}$ is a sequence for which we allow a concatenation-based notation, i.e.\ we write $(a,b)\cdot \seq{s}$ the sequences whose first two elements are $a,b$ (and we abusively write $a\cdot\seq{s}$ instead of $(a)\cdot\seq{s}$). Given a permutation $\sigma$ over the natural numbers, we write $\sigma(\seq{s})$ the result of its natural action on the $\naturalN$-indexed list $\seq{s}$.

\begin{definition}[Microcosms]
For all integer $i\geqslant 1$, we consider the microcosm $\mathfrak{m}_{i}$ generated by the translations $\ttrm{t}_{z}:(x,\seq{s})\mapsto(x+z,\seq{s})$ for all integer $z$, and the permutations $\ttrm{p}_{\sigma}:(x,\seq{s})\mapsto(x,\sigma(\seq{s}))$ for all permutation $\sigma$ such that $\sigma(k)=k$ for all $k>i$. We write $\microcosm{m}_{\infty}$ the union $\cup_{i>1}\microcosm{m}_{i}$.

We also define the microcosms $\bar{\microcosm{m}}_{i}$ as the smallest microcosm containing $\microcosm{m}_{i}$ and all translations\footnote{We denote here by $a\bar{+}b$ the fractional part of the sum $a+b$.} $\ttrm{t}_{\lambda}:(x,a\cdot \seq{s})\mapsto(x, (a\bar{+}\lambda)\cdot\seq{s})$ for $\lambda$ in $[0,1]$.
\end{definition}

We now define a bijective measure-preserving pairing function: $[\cdot,\cdot]:[0,1]^{2}\rightarrow[0,1]$. Although it will not be used in the next sections, this will help us draw the connection between the present results and models of Elementary Linear Logic.

\noindent Given a subset $A$ of $\measure{X}$, integers $d<n$, we define the set:
$$\pushed{d}(A)=\{(a,[x,y]\cdot\seq{s}): (a,\seq{s})\in A, d\leqslant nx\leqslant d+1, y\in[0,1]\}.$$
Given a measurable map $f:A\rightarrow B$ and integers $d,d'<n$, we define the measurable map:
$$\pushed{d,d'}(f):\left\{\begin{array}{rcll}
		\pushed{d}(A)&\rightarrow&\pushed{d'}(B)\\
		(a,x\cdot\seq{s})&\mapsto&(a',y\cdot\seq{s'})&\text{($(a',\seq{s'})=f(a,\seq{s})$, $y=x+(d'-d)/n$)}
		\end{array}\right.
		$$

\begin{definition}
Given a graphing $G=\{(S^{G}_{e},\ttrm{i}^{G}_{e},\ttrm{o}^{G}_{e},\phi^{G}_{e})$ of dialect $D=[n]$, we define the promotion $\oc G$ of $G$ as the following graphing of dialect $[0]$:
$$\{(\pushed{\ttrm{i}^{G}_{e}}(S^{G}_{e}),0,0,\pushed{\ttrm{i}^{G}_{e},\ttrm{o}^{G}_{e}}(\phi^{G}_{e}))~|~e\in E^{G}\}$$
\end{definition}

This previous definition is a \emph{perennisation}, as defined in earlier papers \cite{seiller-phd,seiller-goie}, i.e. it maps arbitrary graphings to graphings with trivial dialect $[0]$. This is to ensure that all graphings of the form $\oc A$ are \emph{duplicable}: since one can always find a graphing $C$ such that $C\plug A\simeq A\otimes A$ for all $A$ with a trivial dialect \cite[Proposition 36]{seiller-goie}, we can implement \emph{contraction} on graphings of the form $\oc A$, and by extension on conducts generated by graphings of this form.

\begin{definition}
Given a behaviour $\cond{A}$, we define the conduct $\cond{\oc A}$ as the set $\{(0,\oc G)~|~ G\in\cond{A}\}^{\pol\pol}$.
\end{definition}

Following the remark above, given any conduct $\cond{A}$ one can always define a graphing $C$ implementing contraction, i.e. such that $(0,C)\in\cond{\oc A\multimap \oc A\otimes \oc A}$.

\begin{remark}
It is important to note that the conduct $\cond{\oc A}$ \emph{never} is a behaviour. However, if $\cond{B}$ is an arbitrary behaviour, $\cond{\oc A\multimap B}$ is a behaviour \cite[Corollary 57]{seiller-goie}.
\end{remark}

\begin{theorem}\label{theoremELL}
Consider the microcosm $\microcosm{p}$ generated by $\bar{\microcosm{m}}_{\infty}$ together with the additional maps $\pair$ and $\pair^{-1}$, where $\pair: (a,(x,y)\cdot\seq{s})\mapsto (a,[x,y]\cdot\seq{s})$. For any microcosm containing $\microcosm{p}$, the set of conducts and behaviours is a model of Elementary Linear Logic.
\end{theorem}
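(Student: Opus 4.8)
The plan is to obtain the multiplicative--additive structure for free and then concentrate all the work on the exponential modality $\oc$. Since $\microcosm{p}$ --- and \emph{a fortiori} any microcosm containing it --- is a microcosm in the sense of \autoref{microcosm}, \autoref{thm:mallmodels} already guarantees that the behaviours form a model of \MALL without units (and that conducts model \MLL). What remains is therefore to verify that the promotion operation $\oc$, together with the induced conduct $\cond{\oc A}$, validates exactly the structural principles of \ELL: functorial (whole-context) promotion, contraction and weakening, but \emph{neither} dereliction \emph{nor} digging.

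First I would isolate the one property of $\microcosm{p}$ that makes the promotion well behaved, namely closure under the construction $\phi\mapsto\pushed{d,d'}(\phi)$. Unfolding the definition of $\oc G$, each edge map $\phi^{G}_{e}\in\microcosm{p}$ is replaced by $\pushed{\ttrm{i}^{G}_{e},\ttrm{o}^{G}_{e}}(\phi^{G}_{e})$, which merely conjugates $\phi^{G}_{e}$ by the pairing isomorphism and shifts the first paired coordinate by a rational amount. Because $\microcosm{p}$ is generated by $\bar{\microcosm{m}}_{\infty}$ together with $\pair$ and $\pair^{-1}$, these conjugates again lie in $\microcosm{p}$; hence $\oc$ sends $\microcosm{p}$-graphings to $\microcosm{p}$-graphings (with trivial dialect $[0]$), so that $\cond{\oc A}$ is well defined inside the model. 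This is the step I expect to be the genuine obstacle: one must check not only that each promoted edge stays in $\microcosm{p}$, but that promotion commutes with execution in the appropriate sense, so that $\oc(F\plug G)$ and the plugging of the promotions agree up to the refinement / almost-everywhere equivalence. This monoidality of $\oc$ is what makes functorial promotion sound, and it is exactly the point where $\pair$ must interact correctly with the composition of alternating paths.

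With closure and monoidality in hand, the \ELL rules follow along the lines already established in earlier work. Functoriality of $\oc$ upgrades any $F\in\cond{A\multimap B}$ to $\oc F\in\cond{\oc A\multimap \oc B}$, and iterating over a boxed context yields the functorial promotion rule. Contraction is handled by the duplicating graphing $C$: since every graphing of the form $\oc A$ has trivial dialect, \cite[Proposition 36]{seiller-goie} provides $C$ with $C\plug(\oc A)\simeq \oc A\otimes \oc A$, and the permutations contained in $\bar{\microcosm{m}}_{\infty}\subseteq\microcosm{p}$ realise it, giving $(0,C)\in\cond{\oc A\multimap \oc A\otimes \oc A}$. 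Weakening is then obtained from the behaviour closure condition together with \cite[Corollary 57]{seiller-goie}, which ensures $\cond{\oc A\multimap B}$ is a behaviour whenever $\cond{B}$ is.

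Finally I would argue that the model validates \ELL and not full linear logic. The promotion strictly increases depth by pushing the finite dialect into the measured coordinate via $\pair$, an operation that cannot be inverted inside the model to recover a bounded dialect; concretely, the remark preceding the theorem records that $\cond{\oc A}$ is never a behaviour, which obstructs both dereliction $\oc A\multimap A$ and digging $\oc A\multimap \oc\oc A$. The exponential is thus functorial but non-idempotent, which is precisely the exponential discipline of Elementary Linear Logic.
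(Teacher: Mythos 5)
Your overall decomposition matches the paper's: \MALL comes for free from \autoref{thm:mallmodels}, contraction is automatic for graphings of trivial dialect via the duplication graphing, and the whole burden falls on functorial promotion. The problem is that the step you yourself single out as ``the genuine obstacle'' --- that promotion commutes with execution --- is precisely the step you do not carry out, and the naive form in which you state it is not what holds. The graphing $\oc F\plug\oc A$ is \emph{not} equivalent to $\oc(F\plug A)$: after promotion, the dialects $D^{F}$ and $D^{A}$ have both been pushed, via $\pair$, into the \emph{same} freshly created coordinate of $[0,1]^{\naturalN}$, so plugging the two promotions directly makes the two dialect encodings collide, whereas the execution $F\plug A$ must carry the product dialect $D^{F}\times D^{A}$. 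The paper's proof supplies exactly the missing device: the re-association maps $\lin:(a,[x,[y,z]]\cdot\seq{s})\mapsto(a,[[x,y],z]\cdot\seq{s})$ and $\rin:(a,[x,[y,z]]\cdot\seq{s})\mapsto(a,[[y,x],z]\cdot\seq{s})$, which lie in $\microcosm{p}$ precisely because $\pair$ and $\pair^{-1}$ were added as generators, together with the verification that $\lin(\oc F)\plug\rin(\oc A)$ is equivalent to $\oc(F\plug A)\in\cond{\oc B}$. Conjugating by $\lin$ and $\rin$ relocates the two dialect encodings into disjoint slots of a nested pairing so that alternating paths compose as they do between $F$ and $A$; this is essentially the entire content of the theorem beyond \autoref{thm:mallmodels}, and it is absent from your argument.

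Your remaining observations are correct but secondary. The closure of $\microcosm{p}$ under $\phi\mapsto\pushed{d,d'}(\phi)$ is indeed needed for $\cond{\oc A}$ to live in the model (and is why the fractional translations of $\bar{\microcosm{m}}_{\infty}$ appear among the generators), and this is worth making explicit since the paper leaves it implicit. The closing paragraph on the failure of dereliction and digging, however, is not required: the statement only asserts that the conducts and behaviours form a model of \ELL, i.e.\ soundness for \ELL's rules, not that stronger exponential principles fail.
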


\begin{proof}
We only need to check that functorial promotion can be implemented, as contraction is automatically satisfied \cite{seiller-goie} and the fact that it is a model of \MALL follows from \autoref{thm:mallmodels}. The technique is similar as what is used in previous papers \cite{seiller-goie,seiller-goif}. First, we notice the maps $\lin=(a,[x,[y,z]]\cdot\seq{s})\mapsto(a,[[x,y],z]\cdot\seq{s})$ and $\rin=(a,[x,[y,z]]\cdot\seq{s})\mapsto(a,[[y,x],z]\cdot\seq{s})$ 
 belong to the microcosm $\microcosm{p}$. Then, given $F\in\cond{A\multimap B}$ and $A\in\cond{A}$, we can check that $\lin(\oc F)\plug\rin(\oc A)$ is equivalent to $\oc (F\plug A)$, which is an element of $\cond{\oc B}$.
\end{proof}


\subsection{Representation of Binary Words}

We use here the \ELL encoding of binary words, i.e. as elements of the type $\ListTypeELL=\forall X, \oc (X\multimap X)\multimap \oc (X\multimap X)\multimap \oc (X\multimap X)$. We write $\Sigma=\{0,1\}$, and denote by $\starred{\Sigma}$ the extended alphabet $\Sigma\cup\{\star\}$: a binary word $\word{w}$ will be represented with a starting symbol $\star$, i.e. $\word{w}=\star a_{1} a_{2} \dots a_{n}$ where $a_{i}\in\Sigma$.

\begin{notation}
We write $\ext{\Sigma}$ the set $\starred{\Sigma}\times\{\In,\Out\}$. We also denote by $\vertices{\Sigma}$ the set $\ext{\Sigma}\cup\{\ttrm{a},\ttrm{r}\}$, where $\ttrm{a}$ (resp. $\ttrm{r}$) stand for $\accept$ (resp. $\reject$). 
\end{notation}

\begin{notation}
We fix once and for all an injection $\Psi$ from the set $\vertices{\Sigma}$ to intervals in $\mathbf{R}$ of the form $[k,k+1]$ with $k$ an integer. For all $f\in\vertices{\Sigma}$ and $Y$ a measurable subset of $[0,1]^{\naturalN}$, we denote by $\bracket{f}_{Y}$ the measurable subset $\Psi(f)\times Y$ of $\measure{X}$. If $Y=[0,1]^{\naturalN}$, we omit the subscript and write $\support{f}$. The notation extends to any subset $S$ of $\vertices{\Sigma}$, i.e.\ $\support{S}$ is the (disjoint) union $\cup_{f\in S}\support{f}$.
\end{notation}

Given a word $\word{w}=\star a_{1}a_{2}\dots a_{k}$, we denote $\graphtw{w}$ the graph with set of vertices $V^{\graphtw{w}}\times D^{\graphtw{w}}$, set of edges $E^{\graphtw{w}}$, source map $s^{\graphtw{w}}$ and target map $t^{\graphtw{w}}$ respectively defined as follows:
\begin{equation*}
\begin{array}{cc}
\begin{array}{rcl}
V^{\graphtw{w}}&=&\ext{\Sigma}\\
D^{\graphtw{w}}&=&[k]\\
E^{\graphtw{w}}&=&\{r,l\}\times[k]\\
\end{array}
&
\begin{array}{rcl}
s^{\graphtw{w}}&=&(r,i)\mapsto (a_{i},\Out,i)\\
		&&(l,i)\mapsto (a_{i}, \In,i)\\
t^{\graphtw{w}}&=&(r,i)\mapsto (a_{i+1},\In,i+1 \textnormal{ mod } k+1)\\
		&&(l,i)\mapsto (a_{i-1}, \Out,i-1 \textnormal{ mod } k+1)
\end{array}
\end{array}
\end{equation*}
This graph is the discrete representation of $\word{w}$. Detailed explanations on how these graphs relate to the proofs of the formula $\ListTypeELL$ can be found in earlier work \cite{seiller-phd,seiller-conl}.

\begin{definition}
Let $\word{w}$ be a word $\word{w}=\star a_{1}a_{2}\dots a_{k}$ over the alphabet $\Sigma$. We define the word graphing $\graphingtw{w}$ of support $\wordsupport$ and dialect $D^{\graphtw{w}}$ by the set of edges $E^{\graphtw{w}}$ and for all edge $e$:
$$\{(\bracket{f},i,j,\phi_{f,i}^{g,j},1): e\in E^{\graphtw{w}}, s^{\graphtw{w}}(e)=(f,i), t^{\graphtw{w}}(e)=(g,j), \phi_{f,i}^{g,j}:(\bracket{f},x,i)\mapsto (\bracket{g},x,j)\}$$
\end{definition}

\begin{notation}
We write $\reptw{w}$ the set of word graphings for $\word{w}$. It is defined as the 
set of graphings obtained by renaming the dialect $D^{\graphtw{w}}$ w.r.t. an injection $[k]\rightarrow [n]$.
\end{notation}

\begin{definition}
Given a word $\word{w}$, a \emph{representation of $\word{w}$} is a graphing $\oc L$ where $L$ belongs to $\reptw{w}$. The set of representations of words in $\Sigma$ is denoted $\twwprojects$, the set of representations of a specific word $\word{w}$ is denoted $\repany{w}$. 

We then define the conduct $\oc\ListType=\cond{(\twwprojects)^{\pol{}\pol{}}}$.
\end{definition}

\begin{definition}
We define the (unproper) behaviour $\NBool$ as $\cond{T}_{\resultsupport}$, where for all measurable set $V$ the behaviour $\cond{T}_{V}$ is defined as the set of all projects of support $V$.
For all microcosm $\microcosm{m}$, we define $\pred{m}$ as the set of $\microcosm{m}$-graphings in $\oc\ListType\multimap\NBool$.
\end{definition}


\subsection{Predicate Machines and Tests}

We now turn to the notion of machine. We focus in this paper on machines computing predicates, i.e. elements of the type $\oc\ListType\multimap\NBool$. Computing devices are traditionally discrete and finite objects, and it is therefore quite natural to envision them as graphs. However, the notion we consider -- called $\microcosm{m}$-machines -- will be \emph{realisations} of graphs as $\microcosm{m}$-graphings, i.e.\ infinite objects in some ways. Intuitively, the underlying graph corresponds to the simple notion of automaton (with the dialect playing the role of control states), while the realisations of edges correspond to particular \emph{instructions}. This intuitive understanding of $\microcosm{m}$-machines can be followed through the rest of this paper.

\begin{definition}
A graphing $G$ is \emph{finite} when there exists a graphing $H$ such that\footnote{We use the notation $G\leqslant H$ for \enquote{F is a refinement of $G$} for the notion of refinement explained in \autoref{subsec:basics}.} $G\leqslant H$ and the set of edges $E^{H}$ is finite.
\end{definition}

\begin{definition}
A \emph{nondeterministic predicate $\microcosm{m}$-machine} over the alphabet $\Sigma$ is a finite $\microcosm{m}$-graphing belonging to $\pred{m}$ with all weights equal to $1$.
\end{definition}

The computation of a given machine given an argument is represented by the \emph{execution}, i.e. the computation of paths defined in \autoref{subsec:execution}. The result of the execution is an element of $\NBool$, i.e.\ in some ways a generalised boolean value\footnote{If one were working with \enquote{deterministic machines} \cite{seiller-towards}, it would belong to the subtype $\Bool$ of booleans.}.

\begin{definition}[Computation]
Let $M$ be a $\microcosm{m}$-machine, $\word{w}\in\Sigma^{\ast}$ and $\oc L\in\oc\ListType$. The \emph{computation} of $\de{M}$ over $\oc L$ is defined as the graphing $M\plug \oc L$, an element of $\NBool$.
\end{definition}

We now introduce the notion of \emph{test}. This notion is essential as it allows for considering several notions of acceptance. Even though acceptance may be defined \enquote{by hand} by describing directly the expected result, the definition through tests allows for a more interesting definition. Indeed, the acceptance is described inside the model, using already existing notions, i.e. we do not modify the models to define testing. In other words, acceptance and rejection are given a logical meaning, as testing is tied with the process of constructing types.

\begin{definition}[Tests]\label{def:tests}
A \emph{test} is a family $\testfont{T}=\{\de{t}_{i}=(t_{i},T_{i})~|~i\in I\}$ of projects of support $\resultsupport$.
\end{definition}

We now want to define the language characterised by a machine. For this, one could consider \emph{existential} $\mathcal{L}_{\exists}^{\testfont{T}}(M)$ and  \emph{universal} $\mathcal{L}_{\forall}^{\testfont{T}}(M)$ languages for a machine $M$ w.r.t. a test $\testfont{T}$:
$$
\begin{array}{rcl}
\mathcal{L}_{\exists}^{\testfont{T}}(M)&=&\{\word{w}\in\Sigma^{\ast}~|~\forall \de{t}_{i}\in\testfont{T}, \exists \de{w}\in\repany{w}, M\plug \de{w}\poll{} \de{t}_{i}\}\\
\mathcal{L}_{\forall}^{\testfont{T}}(M)&=&\{\word{w}\in\Sigma^{\ast}~|~\forall \de{t}_{i}\in\testfont{T}, \forall \de{w}\in\repany{w}, M\plug \de{w}\poll{} \de{t}_{i}\}
\end{array}
$$

We now introduce the notion of uniformity, which describes a situation where both definitions above coincide. This collapse of definitions is of particular interest because it ensures that both of the following problems are easy to solve:
\begin{itemize}[nolistsep,noitemsep]
\item whether a word belongs to the language: from the existential definition one only needs to consider one representation of the word;
\item whether a word does not belong to the language: from the universal definition, one needs to consider only one representation of the word.
\end{itemize}

\begin{definition}[Uniformity]
Let $\microcosm{m}$ be a microcosm. The test $\testfont{T}$ is said \emph{uniform} w.r.t. $\microcosm{m}$-machines if for all such machine $M$, and any two elements $\de{w},\de{w'}$ in $\repany{w}$:
$$M\plug \de{w}\in \testfont{T}^{\pol} \text{ if and only if }M\plug \de{w}\in \testfont{T}^{\pol}$$
Given a $\microcosm{m}$-machine $M$, we write in this case $\mathcal{L}^{\testfont{T}}(M)=\mathcal{L}_{\exists}^{\testfont{T}}(M)=\mathcal{L}_{\forall}^{\testfont{T}}(M)$.
\end{definition}

\section{Characterising a nondeterministic Hierarchy}

\subsection{Multihead Automata}

We consider a variant of the classical notion of two-way multihead finite automata obtained by:
\begin{itemize}[nolistsep,noitemsep]
\item fixing the right and left end-markers as both being equal to the fixed symbol $\star$;
\item fixing once and for all unique initial, accept and reject states;
\item choosing that each transition step moves exactly one of the multiple heads of the automaton;
\item imposing that all heads are repositioned on the left end-marker before accepting/rejecting.
\end{itemize}
It should be clear that these choices in design have no effect on the sets of languages recognised. 

\begin{definition}
A \emph{two-way multihead automaton} $\automaton{M}$ with $k$ heads is defined as a tuple $(\Sigma,Q,\rightarrow)$, where $\rightarrow\subseteq \left(\starred{\Sigma}^{k}\times Q\right)\times\left((\{1,\dots,k\}\times\{\In,\Out\})\times Q\right)$ is the \emph{transition relation} of $\automaton{M}$. The automaton $\automaton{M}$ is \emph{deterministic} when the relation $\rightarrow$ is functional.

The set of two-way multihead automata with $k$ heads is written $\twnfa{k}$, and the set of all two-way multihead automata $\cup_{k\geqslant 1}\twnfa{k}$ is denoted by $\twnfas$. 
\end{definition}

\begin{definition}
We denote \cctwconfa{k} the set of languages accepted by automata in $\twnfa{k}$, where an automaton $\automaton{M}$ accepts a word $\word{w}$ if and only there are no computation trace of $\automaton{M}$ given $\word{w}$ as input leading to a rejecting state.
\end{definition}

The set of languages \Regular= \cctwconfa{1} is usually called the set of \emph{regular languages}. We now state two of the main results in the theory of two-way multihead automata.

\begin{theorem}[Monien \cite{monien}]\label{thm:monien}
For all $k$, the set \cctwconfa{k} is a \emph{strict} subset of \cctwconfa{k+1}. 
\end{theorem}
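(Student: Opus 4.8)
The plan is to separate the two inclusions. The easy half, $\cctwconfa{k}\subseteq\cctwconfa{k+1}$, is a direct embedding: any $\automaton{M}\in\twnfa{k}$ may be viewed as an element of $\twnfa{k+1}$ whose additional $(k{+}1)$-th head never moves, remaining pinned on the left end-marker $\star$ throughout (which is compatible with the requirement that all heads sit on $\star$ before halting). This transformation induces a bijection on computation traces that preserves the presence or absence of a rejecting trace, so the co-nondeterministic language is unchanged and every $L\in\cctwconfa{k}$ belongs to $\cctwconfa{k+1}$.

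For strictness I would, for each fixed $k$, exhibit a witness language $L_{k+1}\in\cctwconfa{k+1}\setminus\cctwconfa{k}$. The natural candidates are languages of structured inputs---finitely many blocks $w_{1},w_{2},\dots,w_{m}$ over $\Sigma$ separated by a fresh marker---whose membership encodes the simultaneous verification of a family of block equalities that is just large enough to require $k+1$ coordinated positions. The \emph{positive} direction is then a concrete construction: a $(k{+}1)$-head automaton places one head on each relevant block, sweeps the heads in lockstep comparing symbols, and enters $\reject$ as soon as a mismatch is found; under the co-nondeterministic convention this means the input is accepted exactly when all the required equalities hold, so $L_{k+1}\in\cctwconfa{k+1}$.

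The \emph{negative} direction---that no automaton in $\twnfa{k}$ recognises $L_{k+1}$---is the crux and the principal obstacle. Here I would argue by a crossing-sequence counting argument adapted to multiple heads. Fix a putative $\automaton{M}\in\twnfa{k}$ with state set $Q$ and restrict attention to inputs whose blocks range over a large finite pool. A configuration of $\automaton{M}$ is a state together with the $k$ head positions, so the total information a run can carry across the boundaries separating the blocks is bounded by a function of $|Q|$, $k$, and the input length that is too small to record the membership status once the pool is large enough: at any instant $k$ heads straddle at most $\binom{k}{2}<\binom{k+1}{2}$ pairs of blocks, and the two-way motion can be summarised by finitely many crossing sequences at each boundary. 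A pigeonhole argument then produces two inputs, one in $L_{k+1}$ and one outside it, whose crossing sequences---and therefore whose entire families of computation traces---coincide at every boundary; since membership in $\cctwconfa{k}$ is determined solely by whether some trace rejects, $\automaton{M}$ cannot distinguish them, contradicting correctness.

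The delicate points, and where Monien's original combinatorial analysis concentrates, are twofold: making the crossing-sequence bookkeeping rigorous in the two-way multihead setting, where a head may re-enter a region many times and several heads may occupy the same region, so that the relevant \emph{interface} information is genuinely finite and composable; and designing the block pool and the equality pattern of $L_{k+1}$ so that the fooling pair furnished by the pigeonhole principle is guaranteed to differ in membership. Once these are in place the contradiction, and hence the strict inclusion, follows.
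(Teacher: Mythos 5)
The paper does not prove this statement: it is imported verbatim from Monien's work (the citation \cite{monien} in the theorem header is the proof), so there is no internal argument to compare yours against. Your easy half is fine: padding a $k$-head automaton with an idle $(k{+}1)$-th head parked on the left end-marker is compatible with all of the paper's normalisation conventions and clearly preserves the co-nondeterministic language.

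The separation half, however, has a genuine gap, and it is exactly the one you flag yourself and then defer. The block-equality witness language together with a crossing-sequence pigeonhole is the Yao--Rivest argument, and it proves the hierarchy for \emph{one-way} multihead automata. The automata in $\twnfa{k}$ here are \emph{two-way}: a single head may cross a fixed boundary unboundedly many times, so the crossing sequence at a boundary is not of bounded length, the set of possible ``interfaces'' is infinite, and the pigeonhole step as you describe it does not produce a fooling pair. Saying that ``the two-way motion can be summarised by finitely many crossing sequences at each boundary'' is precisely the claim that fails without substantial further work; you cannot bound the number of crossings by a function of $|Q|$ and $k$ alone, because distinct traversals can carry different head-position information even when they repeat states. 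Monien's actual proof of the two-way hierarchy is structured quite differently (a counting/diagonalisation argument, carried out over a one-letter alphabet where the configuration space is polynomial in the input length, rather than a fooling-set argument over structured binary inputs). As written, your proposal establishes the inclusion but only gestures at strictness, so the theorem is not proved.
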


\begin{theorem}
$\cup_{k\geqslant 1} \textnormal{\cctwconfa{k}} = \textnormal{\coNLogspace}.$
\end{theorem}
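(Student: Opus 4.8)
The plan is to prove the two inclusions separately, the key observation being that the acceptance condition fixed above is \emph{co-nondeterministic}: a word is accepted precisely when \emph{no} computation trace reaches the reject state. This matches $\coNLogspace$ on the nose, so that no appeal to the Immerman--Szelepcs\'enyi theorem will be needed; I only use the definition $\coNLogspace=\{L~|~\overline{L}\in\NLogspace\}$. The union over $k$ on the left-hand side corresponds, on the complexity side, to the freedom to choose the constant hidden in the $O(\log n)$ space bound.

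For the inclusion $\cup_{k\geqslant 1}\cctwconfa{k}\subseteq\coNLogspace$, I would fix $k$ and an automaton $\automaton{M}\in\twnfa{k}$. A configuration of $\automaton{M}$ on an input $\word{w}$ of length $n$ consists of a control state (from a fixed finite set) together with the $k$ head positions, each an integer in $\{0,\dots,n+1\}$; it is thus describable in $O(k\log n)=O(\log n)$ bits. A nondeterministic logspace machine can therefore guess, step by step, a computation trace of $\automaton{M}$, storing only the current configuration and checking whether the reject state is ever entered. This shows that the \emph{rejection} problem lies in $\NLogspace$, and since $\automaton{M}$ accepts exactly the complement of this set, its language lies in $\coNLogspace$. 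Taking the union over all $k$ preserves membership in $\coNLogspace$.

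For the converse $\coNLogspace\subseteq\cup_{k\geqslant 1}\cctwconfa{k}$, let $L\in\coNLogspace$, so that $\overline{L}$ is decided by a nondeterministic logspace machine $N$ using at most $c\log n$ work-tape cells. The work-tape contents of $N$ encode a number bounded by $n^{O(c)}$, and the central step is to represent this number by the positions of a \emph{constant} number of heads, using a mixed-radix encoding in which each head position ranges over $\{0,\dots,n+1\}$ and hence carries $\log n$ bits. One then simulates each transition of $N$ by finite-control routines that move heads: incrementing or decrementing a counter, comparing a head position against the input length (detected via the end-marker $\star$), and reading the input symbol currently scanned are all realisable with finitely many auxiliary heads and states. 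The resulting multihead automaton $\automaton{M}$ has, for each input, a correspondence between its computation traces and those of $N$; I would wire the construction so that $N$ reaching its accept state becomes $\automaton{M}$ reaching its reject state, repositioning all heads on the left end-marker before halting as required. Then $\automaton{M}$ has a rejecting trace on $\word{w}$ iff $N$ accepts $\word{w}$ iff $\word{w}\in\overline{L}$; under the co-nondeterministic acceptance mode, $\automaton{M}$ therefore accepts exactly $L$, with some number of heads $k$ depending on $c$.

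The main obstacle is the simulation in the converse direction, and specifically the faithful encoding of an $O(\log n)$-bit work tape into constantly many head positions together with the implementation of the elementary arithmetic and comparison operations on this encoding by head movements alone. This is the classical multihead simulation underlying the equalities (deterministic multihead) $=\Logspace$ and (nondeterministic multihead) $=\NLogspace$; the bookkeeping needed to perform carries and borrows across the mixed-radix digits, while keeping the number of heads and control states finite, is where all the technical work concentrates.
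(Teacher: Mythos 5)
Your argument is correct, and it is the standard one. The paper itself gives no proof of this theorem: it is stated (alongside Monien's hierarchy theorem) as one of the classical results of the theory of two-way multihead automata, namely Hartmanis's characterisation of logarithmic space by multihead finite automata, transposed to the co-nondeterministic acceptance mode that the paper fixes (accept iff no trace rejects). Your first inclusion is complete as written; your second inclusion correctly reduces to the classical encoding of a $c\log n$-bit work tape into the positions of constantly many heads, and you rightly flag that the mixed-radix arithmetic by head movements is where the work lies -- that sketch is exactly the content of the cited classical theorem, so nothing essential is missing relative to what the paper itself relies on. One small point worth making explicit: you need the simulation to be trace-faithful in both directions (no spurious rejecting traces of $\automaton{M}$ beyond those induced by accepting traces of the simulated machine), which the standard construction does guarantee but which is the step that makes the co-nondeterministic acceptance condition come out right.
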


We will now show how $k$-head multihead automata corresponds to $\microcosm{m}_{k}$-machines. The reader will find some examples of graphing representations of integers, machines, and computations in an overview and perspective paper by the author \cite{seiller-towards}.

\subsection{Automata as Machines}

There are two main differences between the model of multihead automata with $k$ heads and the notion of $\microcosm{m}_{k}$-machines.
\begin{itemize}[noitemsep,nolistsep]
\item The first difference is that when one \enquote{moves the $i$-th head} of a $\microcosm{m}_{k}$-machine, it induces a reindexing of the sets of heads. I.e. a $\microcosm{m}_{k}$-machine should be understood as a multihead automata that can only move its principal head, but has the possibility of reindexing its heads following any permutation over $k$ elements. To deal with this, 
we will extend the set of states $Q$ of the automaton we wish to represent and consider $\bar{Q}=Q\times \mathfrak{G}_{k}$; the set of permutations $\mathfrak{G}_{k}$ being used to keep track of the heads' reindexings.
\item The second difference comes from the fact that the computation of $\microcosm{m}_{k}$-machines is \enquote{dynamic}, i.e. corresponds to a dialogue between the machine and the representation of the word it is given as input. As a consequence, one has the knowledge of what symbol a given pointer is reading at a given location \emph{only at the exact moment the pointer moves onto this location}. I.e. the pointer receives information about the input from the integer, and one has to store it if it is to be reused later on. This is different from the way multihead automata compute since the latter can, at any given time, access the value located where any head is pointing at. To take care of this difference, we extend once again the set of states. As a consequence, the automaton $\automaton{M}$ with a set of states $Q$ will be realised as a $\microcosm{m}_{k}$-machine with an extended set of states (encoded as the \emph{dialect}) $\bar{Q}\times\{\star,0,1\}^{k}$. 
\end{itemize}

\begin{definition}
Let $\automaton{M}$ be an automaton with $k$ heads. We here write $\automaton{M}=(\Sigma,Q,\rightarrow)$. We define $\autograph{M}$ a graphing in $\microcosm{m}_{k}$ with dialect -- set of states -- $Q\times\mathfrak{G}_{k}\times\{\star,0,1\}^{k}$ as follows. 

\noindent The set of edges of $\autograph{M}$ is the set:
\[E^{\autograph{M}}=\{(\autograph{t},a,d,\sigma)~|~ t\in\rightarrow, a\in\{\star,0,1\},d\in\{\textnormal{in},\textnormal{out}\}, \sigma\in\mathfrak{G}_{k}\}\]

\noindent The source of the edge $(\autograph{t},s,d,\sigma)$ for $t=((\vec{s},q),(i,d',q'))$ is defined as:
\[
S_{(\autograph{t},a,d,\sigma)}=
\left\{\begin{array}{ll}
\support{(s,d)}\times\{(q,\sigma,\vec{s})\}&
\text{ if }q\neq \textnormal{init}\\
\support{\ttrm{a}}\times\{(\textnormal{init},\identity,\vec{\star})\}&
\text{ if }q=\textnormal{init}\text{ and }d=\textnormal{in}\\
\support{\ttrm{r}}\times\{(\textnormal{init},\identity,\vec{\star})\}&
\text{ if }q=\textnormal{init}\text{ and }d=\textnormal{out}
\end{array}\right.
\]
 
\noindent The target of the edge $(\autograph{t},s,d,\sigma)$ for $t=((\vec{s},q),(i,d',q'))$ is defined as:
\[
T_{(\autograph{t},a,d,\sigma)}=
\left\{\begin{array}{ll}
\support{(s_{i},d')}\times\{(q',\tau_{1,\sigma(i)}\circ\sigma,\vec{s}[s_{\sigma^{-1}(1)}:=s])\}&
\text{ if }q'\not\in\{\textnormal{accept,reject}\}\\
\support{q'}\times\{(\textnormal{init},\identity,\vec{\star})\}&
\text{ if }q'\in\{\textnormal{accept,reject}\}
\end{array}\right.
\]

\noindent The realiser of the edge $(\autograph{t},a,d,\sigma)$ for $t=((\vec{s},q),(i,d',q'))$ is the map $\ttrm{p}_{(1,\sigma(i))}$ composed with the adequate translation on $\integerN$. E.g. when $q\neq \textnormal{init}$ and $q'\not\in\{\textnormal{accept,reject}\}$ it is the map $\ttrm{p}_{(1,\sigma(i))}$ composed with the bijection exchanging $\support{(s,d)}$ and $\support{(s_{i},d')}$. 
\end{definition}

Let us explain how this encoding simulates the automaton. We fix a word $\word{w}=\star a_{1}a_{2}\dots a_{n}$ and a configuration $\ttrm{C}$ of a $k$-head automaton, i.e. a sequence of heads positions $(p_{i})_{i=1}^{k}$ -- where for all $i$, $p_{i}\in\{0,\dots,n\}$ --, and a state $q$. Depending on the value $\vec{s}=a_{p_{1}},\dots,a_{p_{k}}$, the automaton will fire different transitions. Let us pick one, namely $\ttrm{t}=(\vec{s},q)\rightarrow(i,d',q')$. There is a family of corresponding edges in the automaton, denoted by $(\{\ttrm{t}\},a,d,\sigma)$. Here, $\sigma$ is a permutation that remembers how heads have been reindexed since the initial transition; as explained above, this is because moving a head requires a reindexing. The pair $(s,d)$ records a symbol and a direction, namely the symbol and direction of the previous transition made by the automaton: it is therefore uniquely fixed when considering a given computation trace. Then a given edge $(\{\ttrm{t}\},a,d,\sigma)$ maps the set $\support{(s,d)}\times\{(q,\sigma,\vec{s})\}$ to $\support{(s_{i},d')}\times\{(q',\tau_{1,\sigma(i)}\circ\sigma,\vec{s}[s_{\sigma^{-1}(1)}:=a])\}$ (supposing $q\neq\init$ and $q'\neq\mathrm{accept}, \mathrm{reject}$). In doing so, it is updating the value of the sequence $\vec{s}$ according the value read by the pointer moved \emph{during the previous transition which lead to $(s,d)$}. It is also positioning its $i$th head adequately by reindexing it using the map $\ttrm{p}_{(1,\sigma(i))}$ and waiting for the integer to provide its next value in direction $d'$ by fixing the target subset $\support{(s_{i},d')}$ ($s_{i}$ being the last value read by the $i$-th head).

The following proposition is then proved by induction.

\begin{proposition}\label{tracespaths}
Let $\automaton{M}$ be a $k$-heads automaton. Alternating paths of odd length between $\autograph{M}$ and $\oc \graphingtw{w}$ of source $\support{\ttrm{a}}_{Y}$ (resp. of source $\support{\ttrm{r}}_{Y}$) with\footnote{To understand where the subset $Y$ comes from, we refer the reader to the proof of \autoref{technicallemma}.} $Y=[0,\frac{1}{\lg(\word{w})}]^{k}\times[0,1]^{\naturalN}$ are in bijective correspondence with the non-empty computation traces of $\automaton{M}$ given $\word{w}$ as input.
\end{proposition}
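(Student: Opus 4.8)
The plan is to establish the claimed bijection by setting up an explicit correspondence between computation traces of $\automaton{M}$ and alternating paths in $\autograph{M} \plug \oc\graphingtw{w}$, then proving it is well-defined and invertible by induction on the length of the trace (equivalently, on the odd length of the path). First I would unwind the definition of alternating path from \autoref{subsec:execution}: a path alternates edges of $\autograph{M}$ and edges of $\oc\graphingtw{w}$, with the control-state matching condition $\ttrm{o}_{e_{i}} = \ttrm{i}_{e_{i+2}}$ and the positivity-of-measure condition ensuring each successive composition is non-negligible. The source being $\support{\ttrm{a}}_{Y}$ or $\support{\ttrm{r}}_{Y}$ pins down that the path begins at an edge whose source is the accept or reject region, which by the definition of $\autograph{M}$ corresponds precisely to the $q = \textnormal{init}$ edges (the second and third cases of the source definition, indexed by $d = \textnormal{in}$ for $\ttrm{a}$ and $d = \textnormal{out}$ for $\ttrm{r}$). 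Thus the left endpoint of the path is forced to be the initial configuration of $\automaton{M}$.

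Next I would define the forward map: given a computation trace $\ttrm{C}_{0}, \ttrm{C}_{1}, \dots, \ttrm{C}_{m}$ of $\automaton{M}$ on $\word{w}$, where each step fires a transition $\ttrm{t}_{j} = ((\vec{s},q),(i,d',q'))$, I read off the corresponding edge $(\autograph{t_{j}}, a, d, \sigma)$ of $\autograph{M}$ and interleave it with the edge of $\oc\graphingtw{w}$ that transports the moved head to its new position and delivers the freshly read symbol. The permutation $\sigma$ and the stored symbol vector $\vec{s}$ are exactly the bookkeeping data that the dialect $Q\times\mathfrak{G}_{k}\times\{\star,0,1\}^{k}$ was designed to carry, so the key invariant to maintain through the induction is that after $j$ steps the path has reached the subset $\support{(s,d)}\times\{(q_{j},\sigma_{j},\vec{s}_{j})\}$, where $(q_{j},\sigma_{j},\vec{s}_{j})$ encodes the automaton's current state, the accumulated reindexing, and the currently remembered symbols under each head. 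I would verify that the realiser maps and the reindexing $\tau_{1,\sigma(i)}\circ\sigma$, $\vec{s}[s_{\sigma^{-1}(1)}:=a]$ in the target definition reproduce precisely the update of the configuration, matching the informal explanation given just before the proposition. The positivity-of-measure condition is where the subset $Y = [0,\frac{1}{\lg(\word{w})}]^{k}\times[0,1]^{\naturalN}$ enters: because each head reindexing via $\pushed{}{}$ rescales one coordinate by the factor $1/\lg(\word{w})$, restricting the source to $Y$ guarantees the iterated images stay non-negligible, so every step of a genuine trace yields a legitimate alternating path and no spurious paths are created.

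For the inverse map I would argue that any alternating path of odd length starting from $\support{\ttrm{a}}_{Y}$ or $\support{\ttrm{r}}_{Y}$ is forced, step by step, to follow this pattern: the alternation condition plus the state-matching condition leave no freedom once the starting edge is fixed to an init edge, since at each stage the target dialect value determines which family of $\autograph{M}$-edges may fire next, and the word graphing $\oc\graphingtw{w}$ has exactly one edge available to report the symbol at the new head location. Reading the sequence of fired transitions $\ttrm{t}_{j}$ off the $\autograph{M}$-edges recovers a unique computation trace, and oddness of the length corresponds to the path terminating on an $\autograph{M}$-edge, i.e. after a completed transition rather than mid-dialogue. I would check that the two maps are mutually inverse by observing that both are determined by the same shared data (the sequence of transitions together with the read symbols), and that non-emptiness of the trace matches the requirement that the path has at least one edge.

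The hard part will be the bookkeeping of the head-reindexing permutation through the promotion $\oc\graphingtw{w}$: one must check that the composite of $\ttrm{p}_{(1,\sigma(i))}$ with the appropriate translation, interleaved with the $\pushed{\ttrm{i}}{}$-rescaled edges of the promoted word graphing, actually lands in the correct target subset $\support{(s_{i},d')}$ with the correctly permuted and updated dialect, and that the composition $\phi_{e_{i}}\circ\dots\circ\phi_{e_{1}}$ remains measure-positive throughout — this is precisely the role of the parameter $Y$ and is the reason its definition is deferred to the proof of \autoref{technicallemma}. Establishing that the measure stays strictly positive under the repeated $1/\lg(\word{w})$ contractions, and that no two distinct traces can collapse to the same path because the stored symbol vector faithfully records the reading history, are the two points requiring genuine care; the remainder is a routine verification that the edge data of $\autograph{M}$ transcribe the transition relation $\rightarrow$ faithfully.
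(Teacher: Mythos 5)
Your overall strategy --- an explicit transition-by-transition correspondence between traces and alternating paths, proved invertible by induction on the length --- is exactly the paper's: its entire proof consists of the sentence \enquote{The following proposition is then proved by induction}, leaning on the explanatory paragraph that precedes the statement, so your write-up is if anything more detailed. Your identification of the starting edges (the $q=\textnormal{init}$ cases, with $\support{\ttrm{a}}$ versus $\support{\ttrm{r}}$ selected by the direction component) and of the dialect updates $\tau_{1,\sigma(i)}\circ\sigma$ and $\vec{s}[s_{\sigma^{-1}(1)}:=a]$ as transcribing the configuration update is also on target.

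There is, however, one concrete point on which the argument as you state it would not close. Your induction invariant records only the dialect component $(q_{j},\sigma_{j},\vec{s}_{j})$ --- state, accumulated reindexing, remembered symbols. A configuration of a $k$-head automaton also comprises the $k$ head \emph{positions}, and these appear nowhere in your invariant; without them the inverse map cannot reconstruct the trace (two traces agreeing on states and read symbols but differing in head positions would be conflated), and the forward map cannot check that the symbol delivered by the next word-graphing edge is the one actually under the moved head. In the graphing, the positions are carried by the continuous coordinates: the promoted word graphing acts on the first copy of $[0,1]$ by measure-preserving translations between the $\lg(\word{w})$ subintervals of length $1/\lg(\word{w})$, and the machine permutes the first $k$ copies of $[0,1]$, so the set reached after $j$ steps always lies inside a single $k$-cube of side $1/\lg(\word{w})$, and that cube \emph{is} the tuple of head positions (up to $\sigma_{j}$). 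The invariant must therefore be that the path occupies $\support{(s,d)}_{Y_{j}}\times\{(q_{j},\sigma_{j},\vec{s}_{j})\}$ with $Y_{j}$ the cube encoding the current positions. This also corrects your reading of $Y$: none of the realisers are contractions and there is no risk of measure collapse to guard against --- $Y=[0,\frac{1}{\lg(\word{w})}]^{k}\times[0,1]^{\naturalN}$ is simply the cube placing all $k$ heads on the left end-marker, consistent with the initial dialect $\vec{\star}$ and with the convention that heads are repositioned there before accepting or rejecting; this is exactly why the paper defers the explanation of $Y$ to the cube decomposition in the proof of the technical lemma.
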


\begin{corollary}\label{pathaccept}
The automaton $\automaton{M}$ accept the word $\word{w}$ if and only if there exists no alternating path between $\autograph{M}$ and $\oc \graphingtw{w}$ from $\support{\ttrm{r}}$ to itself.
\end{corollary}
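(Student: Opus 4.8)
The plan is to derive Corollary~\ref{pathaccept} directly from Proposition~\ref{tracespaths} by translating the acceptance condition of $\automaton{M}$ into a statement about alternating paths. Recall that, by definition, $\automaton{M}$ accepts $\word{w}$ precisely when there is \emph{no} computation trace of $\automaton{M}$ on input $\word{w}$ leading to a rejecting state. Since all heads are repositioned on the left end-marker $\star$ before accepting or rejecting, a rejecting computation trace is exactly one that terminates in the reject state with its heads reset; under the encoding this corresponds to a path whose final edge lands in $\support{\ttrm{r}}\times\{(\textnormal{init},\identity,\vec{\star})\}$, i.e. a path that, read as a whole, begins and ends at $\support{\ttrm{r}}$.

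First I would invoke Proposition~\ref{tracespaths} in its \enquote{reject} form: the alternating paths of odd length between $\autograph{M}$ and $\oc\graphingtw{w}$ whose source lies in $\support{\ttrm{r}}_{Y}$ are in bijective correspondence with the non-empty computation traces of $\automaton{M}$ on $\word{w}$ that issue from the initial configuration and terminate in the reject state. Under the target clause of the edge definition, when $q'\in\{\textnormal{accept},\textnormal{reject}\}$ the edge maps into $\support{q'}\times\{(\textnormal{init},\identity,\vec{\star})\}$; so a trace reaching the reject state yields precisely a path from $\support{\ttrm{r}}$ to itself, and conversely. Thus the existence of a rejecting computation trace is equivalent, through the bijection, to the existence of an alternating path between $\autograph{M}$ and $\oc\graphingtw{w}$ from $\support{\ttrm{r}}$ to itself.

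Combining the two equivalences gives the statement: $\automaton{M}$ accepts $\word{w}$ iff there is no rejecting trace iff there is no such path. The only care required is to match the quantifier over the auxiliary subset $Y=[0,\frac{1}{\lg(\word{w})}]^{k}\times[0,1]^{\naturalN}$ appearing in the proposition with the bare $\support{\ttrm{r}}$ in the corollary; since the proposition's bijection holds for this specific $Y$ and the presence or absence of a path \enquote{from $\support{\ttrm{r}}$ to itself} is unaffected by restricting to the positive-measure subset $Y$ (an alternating path has positive-measure source by the second path condition, so it survives restriction), the existential statement is insensitive to this choice. I expect the main (minor) obstacle to be exactly this bookkeeping: verifying that the reject-self paths counted by the corollary are the same as the odd-length paths of source $\support{\ttrm{r}}_{Y}$ ending at $\support{\ttrm{r}}$ furnished by the proposition, rather than some larger or smaller family, and in particular that the reset-to-$(\textnormal{init},\identity,\vec{\star})$ encoding of the reject state genuinely closes the path back onto $\support{\ttrm{r}}$.
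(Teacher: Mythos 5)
Your proposal is correct and follows exactly the route the paper intends: the corollary is stated without proof precisely because it is the immediate combination of Proposition~\ref{tracespaths} with the definition of acceptance (no rejecting trace) and the observation that the source clause for $q=\textnormal{init}$, $d=\textnormal{out}$ and the target clause for $q'=\textnormal{reject}$ place both endpoints of the corresponding path in $\support{\ttrm{r}}$. Your extra care about the subset $Y$ and the positive-measure source condition is sound and matches the paper's own footnoted deferral of that bookkeeping to the proof of Lemma~\ref{technicallemma}.
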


\begin{definition}
We define the test $\testdetneg$ as the set consisting of the projects $\de{t}^{-}_{\zeta}=(\zeta,\identity[\support{\ttrm{r}}])$, where $\zeta\neq0$ and $\identity[\support{\ttrm{r}}]$ is the graphing with a single edge and trivial dialect $[0]$: $\{(\support{\ttrm{r}},0,0,x\mapsto x,1\cdot\mathbf{1})\}$.
\end{definition}

The fact that this test is uniform comes from the invariance of the underlying graphing $\identity[\support{\ttrm{r}}]$ w.r.t. any bijective transformation. In more details, two representations of the same integer $\oc W$ and $\oc W'$ can be shown to relate through a measurable (though not measure-preserving) bijection $\theta$ by conjugation, i.e. $\phi\mapsto \theta^{-1}\phi\theta$ maps edges in $\oc W$ to edges in $\oc W'$. Then, one just has to remark that the realiser of an alternating path between $\oc W'$ and $\identity[\support{\ttrm{r}}]$ contains subsequences of the form $\theta\circ\theta^{-1}$ which shows, by simplification, that there exists a corresponding path alternating between $\oc W$ and $\identity[\support{\ttrm{r}}]$.

\begin{proposition}
The test $\testdetneg$ is uniform w.r.t. $\microcosm{m}_{\infty}$-machines.
\end{proposition}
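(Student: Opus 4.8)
The plan is to reduce uniformity to the vanishing of a single measurement and then to transport that measurement between representations by conjugation. First I would unfold the orthogonality condition. Fix an $\microcosm{m}_{\infty}$-machine $M$ and a representation $\de{w}=\oc W$ with $W\in\reptw{w}$. By the extended measurement formula,
\[
\measbig{M\plug\de{w},\de{t}^{-}_{\zeta}}=\meas[]{M,\oc W}+\zeta+\meas[]{M\plug\oc W,\identity[\support{\ttrm{r}}]}.
\]
Since no edge of $M$ or of $\oc W$ carries the marked weight $\mathbf{1}$, the parameter map $m$ vanishes on every circuit between $M$ and $\oc W$, so $\meas[]{M,\oc W}=0$. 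Hence $M\plug\de{w}\in\testdetneg^{\pol}$ --- i.e. $M\plug\de{w}\poll{}\de{t}^{-}_{\zeta}$ for every $\zeta\neq 0$ --- holds precisely when $\meas[]{M\plug\oc W,\identity[\support{\ttrm{r}}]}$ is finite and differs from $-\zeta$ for all $\zeta\neq 0$, that is, when it vanishes; by \autoref{pathaccept} this is exactly the absence of a rejecting alternating path. Uniformity thus amounts to showing that the vanishing of $\meas[]{M\plug\oc W,\identity[\support{\ttrm{r}}]}$ is independent of the choice of $W\in\reptw{w}$.

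Second, I would construct the conjugating bijection. Two elements $W,W'\in\reptw{w}$ differ only through the injective dialect renaming used to lay out the discrete representation of $\word{w}$; after promotion these renamings are recorded, via the maps $\pushed{d}$, as the positions occupied by the word-graphing edges inside the first $[0,1]$-fibre. I would therefore choose a single piecewise-affine bijection $g$ of $[0,1]$ carrying the sub-intervals used by $W$ onto those used by $W'$, and set $\theta(x,\seq{s})$ to be the map fixing the $\integerN$-coordinate and applying $g$ to every entry of $\seq{s}$. Such a $\theta$ is a measurable, non-singular bijection (it rescales by nonzero ratios, hence preserves and reflects null sets) though not measure-preserving, and conjugation $\phi\mapsto\theta^{-1}\phi\theta$ sends the edges of $\oc W$ bijectively onto those of $\oc W'$. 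Because $g$ is applied uniformly across fibres, $\theta$ commutes with every permutation $\ttrm{p}_{\sigma}$ and translation $\ttrm{t}_{z}$ generating $\microcosm{m}_{\infty}$, hence with every realiser of $M$, and it preserves $\support{\ttrm{r}}$ setwise, hence commutes with the test edge $\identity[\support{\ttrm{r}}]$.

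Third, I would match circuits. Unfolding $M\plug\oc W$, a circuit contributing to $\meas[]{M\plug\oc W,\identity[\support{\ttrm{r}}]}$ is a cycle alternating among edges of $M$, edges of $\oc W$, and the test edge. Conjugating each $\oc W$-edge by $\theta$ while leaving the $M$-edges and the test edge untouched turns it into a cycle alternating among $M$, $\oc W'$ and the test: by the commutations above, each interior pair $\theta\theta^{-1}$ straddling an $M$-edge or the test edge cancels, so the composite realiser of the new cycle is exactly $\theta^{-1}\phi_{\pi}\theta$. This yields a weight-preserving bijection between the circuits of $\oc W$ and those of $\oc W'$; moreover the period $\rho_{\phi_{\pi}}$ is conjugation-invariant and the supports correspond through the non-singular $\theta$, so each (non-negative) circuit integrand is positive on a non-null set exactly when its image is. Consequently $\meas[]{M\plug\oc W,\identity[\support{\ttrm{r}}]}$ vanishes if and only if $\meas[]{M\plug\oc W',\identity[\support{\ttrm{r}}]}$ does, which by the first step is uniformity.

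The main obstacle is the bookkeeping in the last two steps: one must verify that a single $\theta$, chosen independently of any circuit, simultaneously conjugates all edges of $\oc W$ into those of $\oc W'$ and genuinely commutes with the fibre-permuting realisers $\ttrm{p}_{\sigma}$ of an arbitrary $\microcosm{m}_{\infty}$-machine, so that the $\theta\theta^{-1}$ cancellations are valid along every cycle, and that non-singularity --- rather than measure-preservation --- suffices for the two measurements to vanish together. These verifications run parallel to the support and rescaling computations of \autoref{technicallemma}.
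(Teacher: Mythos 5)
Your proposal follows essentially the same route as the paper: the paper's justification (given in the short paragraph preceding the proposition) is exactly the conjugation argument you describe --- relating two representations of a word by a measurable, non-measure-preserving but non-singular bijection $\theta$, and cancelling the $\theta\circ\theta^{-1}$ subsequences along alternating cycles against the $\theta$-invariant test graphing $\identity[\support{\ttrm{r}}]$. You are in fact more explicit than the paper, which leaves implicit both the reduction of orthogonality to the vanishing of $\meas[]{M\plug\oc W,\identity[\support{\ttrm{r}}]}$ and the commutation of $\theta$ with the machine's realisers that makes the cancellation legitimate.
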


%
%

\begin{proposition}
Let $\automaton{M}$ be a $\twnfa{k}$, $\word{w}$ a word. Then $\word{w}\in\mathcal{L}^{\testdetneg}(\autograph{M})$ if and only if $\automaton{M}$ accepts $\word{w}$.
\end{proposition}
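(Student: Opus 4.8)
The plan is to unfold both sides into a single scalar computation and then read it off using the path/trace dictionary already established. First I would record that $\autograph{M}$ is a finite $\microcosm{m}_{k}$-graphing in $\pred{\microcosm{m}_{k}}$, and since $\microcosm{m}_{k}\subseteq\microcosm{m}_{\infty}$ it is in particular an $\microcosm{m}_{\infty}$-machine; hence the uniformity of $\testdetneg$ applies and $\mathcal{L}^{\testdetneg}(\autograph{M})=\mathcal{L}_{\forall}^{\testdetneg}(\autograph{M})=\mathcal{L}_{\exists}^{\testdetneg}(\autograph{M})$, so that it suffices to test orthogonality for one fixed representation $\de{w}\in\repany{w}$. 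Membership $\word{w}\in\mathcal{L}^{\testdetneg}(\autograph{M})$ then unfolds to the statement that, viewing the computation $\autograph{M}\plug\de{w}$ as the project $(0,\autograph{M}\plug\de{w})$, it is orthogonal to $\de{t}^{-}_{\zeta}=(\zeta,\identity[\support{\ttrm{r}}])$ for \emph{every} $\zeta\neq 0$.

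Next I would expand this orthogonality through the extended measurement. Both projects carry a single slice of coefficient $1$ and the scalar part of the computation is $0$, so the formula for $\measbig{\cdot,\cdot}$ collapses to
\[ \measbig{(0,\autograph{M}\plug\de{w}),(\zeta,\identity[\support{\ttrm{r}}])}=\zeta+\meas[]{\autograph{M}\plug\de{w},\identity[\support{\ttrm{r}}]}. \]
Write $C=\meas[]{\autograph{M}\plug\de{w},\identity[\support{\ttrm{r}}]}$, a value in $\realposN\cup\{\infty\}$. Orthogonality for a fixed $\zeta$ means $\zeta+C\notin\{0,\infty\}$; quantifying over all $\zeta\neq 0$, the condition $\zeta+C\neq\infty$ forces $C\neq\infty$, while $\zeta+C\neq 0$ for every nonzero $\zeta$ rules out $\zeta=-C$ and hence forces $C=0$. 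Consequently $\word{w}\in\mathcal{L}^{\testdetneg}(\autograph{M})$ if and only if $C=0$, and the proposition reduces to the single equivalence: $C=0$ iff $\automaton{M}$ accepts $\word{w}$.

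The core step is to identify $C$ geometrically. The measurement is a sum over circuits of the union of $\autograph{M}\plug\de{w}$ and $\identity[\support{\ttrm{r}}]$, whose only edge is the identity on $\support{\ttrm{r}}$ carrying the flagged weight $1\cdot\mathbf{1}$. Under the parameter map $m(x,y)=xy$ the weight design is precisely what selects the circuits traversing this edge, i.e.\ those closing a path of the computation that both leaves and returns to $\support{\ttrm{r}}$. By the definition of execution, such an edge of $\autograph{M}\plug\de{w}$ is exactly an alternating path of $\autograph{M}$ against $\oc\graphingtw{w}$ from $\support{\ttrm{r}}$ to itself. Invoking \autoref{tracespaths} and \autoref{pathaccept}, these paths are in bijection with the rejecting computation traces of $\automaton{M}$ on $\word{w}$, and they exist if and only if $\automaton{M}$ does \emph{not} accept $\word{w}$. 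Thus when $\automaton{M}$ accepts there is no contributing circuit and $C=0$; when it does not accept, at least one rejecting trace yields a circuit whose contribution to the measurement is strictly positive (possibly infinite), so $C\neq 0$. Combined with the previous paragraph this closes both directions.

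The main obstacle is the final measure-theoretic claim, namely that a rejecting trace contributes a genuinely nonzero amount to $C$ rather than being silently annihilated. This is exactly where the shape of the weight monoid and of the test matter: I would check that the flagged weight $\mathbf{1}$ on the reject-identity survives precisely along the circuits passing through it, and that the map $\phi_{\pi}$ realised by such a circuit has a source of positive measure, inherited from the restriction to $Y=[0,\tfrac{1}{\lg(\word{w})}]^{k}\times[0,1]^{\naturalN}$ of \autoref{tracespaths}, on which the return time $\rho_{\phi_{\pi}}$ makes the integrand $m(\omega(\pi)^{\rho})/\rho$ integrate to a nonzero value. Because the correspondence of \autoref{tracespaths} is a bijection, distinct circuits cannot cancel, so the positivity of $C$ is controlled and the equivalence $C=0$ iff $\automaton{M}$ accepts $\word{w}$ holds exactly.
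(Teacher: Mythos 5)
Your proposal is correct and follows essentially the same route as the paper's proof: unfold orthogonality to the scalar condition $\zeta+\meas[]{\autograph{M}\plug\de{w},\identity[\support{\ttrm{r}}]}\neq 0,\infty$ for all $\zeta\neq 0$, observe this forces the measurement to vanish, and then identify the contributing circuits with edges of the execution from $\support{\ttrm{r}}_{Y}$ to itself, which \autoref{tracespaths} puts in bijection with rejecting computation traces. Your additional remarks on uniformity and on the positivity of the circuit contributions are sensible elaborations of steps the paper leaves implicit, not a different argument.
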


\begin{proof}
From \autoref{tracespaths} and the constraint on automata that they should reinitialise their pointer to the left end-marker before accepting or rejecting, we know that $R=\autograph{M}\plug \oc \graphingtw{w}$ contains exactly as many edges from $\support{r}_{Y}$ to $\support{r}_{Y}$ -- here $Y$ is defined as in the statement of \autoref{tracespaths} -- as there are rejecting computation traces of $\automaton{M}$ given $\word{w}$ as input. 

Moreover, $\meas{\autograph{M},\oc \graphingtw{w}}$ is equal to $0$ as all weights of these graphings are equal to $1$. Then the result of the computation $(0,R)$ is orthogonal to $\testdetneg$ if and only if $\xi+\meas{R,\{\identity{\support{\ttrm{r}}}\}}\neq 0,\infty$ for all $\xi\neq0$. Now, this is true if and only if that $\meas{R,\identity{\support{\ttrm{r}}}\}}=0$, i.e. if and only if there are no edges from $\support{r}_{Y}$ to $\support{r}_{Y}$ in $R$ since any such edge creates a cycle with $\identity{\support{\ttrm{r}}}\}$ of weight $1\cdot\mathbf{1}$. 
\end{proof}

\begin{theorem}\label{soundness}
Any language computed by an acyclic $k$-head automaton is computed by a $\microcosm{m}_{k}$-machine w.r.t. $\testdetneg$.
\end{theorem}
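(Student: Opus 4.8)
The plan is to take the graphing $\autograph{M}$ associated to a given acyclic $k$-head automaton $\automaton{M}$ as the witnessing machine, and to reduce the statement to two facts: that $\autograph{M}$ is a genuine $\microcosm{m}_{k}$-machine, and that it computes the right language. The latter is essentially already available: the test $\testdetneg$ has been shown uniform with respect to $\microcosm{m}_{\infty}$-machines, hence in particular with respect to $\microcosm{m}_{k}$-machines, and the preceding proposition gives $\word{w}\in\mathcal{L}^{\testdetneg}(\autograph{M})$ if and only if $\automaton{M}$ accepts $\word{w}$. So once $\autograph{M}$ is known to be a $\microcosm{m}_{k}$-machine we immediately obtain $\mathcal{L}^{\testdetneg}(\autograph{M})=L(\automaton{M})$, and since $\automaton{M}$ ranges over all acyclic $k$-head automata the theorem follows. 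The real work is therefore confined to verifying the machine conditions.

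The structural conditions are read off directly from the construction of $\autograph{M}$. Its edge set is indexed by $\rightarrow\times\{\star,0,1\}\times\{\textnormal{in},\textnormal{out}\}\times\mathfrak{G}_{k}$, which is finite because $\rightarrow$ is the transition relation of a finite automaton and $\mathfrak{G}_{k}$ is finite; hence $\autograph{M}$ is a finite graphing. Every realiser is a coordinate permutation $\ttrm{p}_{(1,\sigma(i))}$, fixing all coordinates above $k$ since $i,\sigma(i)\leqslant k$, post-composed with an integral translation on $\integerN$, so every realiser belongs to $\microcosm{m}_{k}$; and the construction gives weight $1$ to each edge. The only nontrivial requirement is membership in $\pred{m_{k}}$, i.e.\ $\autograph{M}\in\oc\ListType\multimap\NBool$.

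The hard part will be precisely this membership, and it is where acyclicity is used. Unfolding the definition of the conduct, I must show that $\autograph{M}\plug\de{a}$ is a well-defined project of support $\resultsupport$ for every $\de{a}\in\oc\ListType=(\twwprojects)^{\pol\pol}$. The support is automatic: executing a graphing of support $\wordsupport\disjun\resultsupport$ against an object of support $\wordsupport$ leaves support $\resultsupport$. Well-definedness then reduces to finiteness of the scalar part $\meas{\autograph{M},\de{a}}$, the graphing part being in any case a legitimate (at most countable) graphing. On the generators this is clean: for a word representation $\oc L\in\repany{w}$, Proposition~\ref{tracespaths} puts the alternating paths between $\autograph{M}$ and $\oc L$ in bijection with the computation traces of $\automaton{M}$ on $\word{w}$, and acyclicity forces every such trace to be finite and there to be only finitely many of them, so $\autograph{M}\plug\oc L$ is a finite graphing; moreover, all weights being $1=(1,0)$, the parameter map $m(x,y)=xy$ makes every circuit contribute $0$, whence $\meas{\autograph{M},\oc L}=0$. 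The step I expect to be the main obstacle is propagating this from the generating set $\twwprojects$ to the whole biorthogonal $(\twwprojects)^{\pol\pol}$: an arbitrary element of the conduct may carry nontrivial weights and may connect outputs back to inputs, so one must argue that acyclicity of $\automaton{M}$ still forbids the divergent families of positively weighted circuits that would send the measurement to $\infty$. This is the one point that genuinely exploits the absence of loops in $\automaton{M}$ rather than the mere finiteness of $\autograph{M}$.

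Assembling these, $\autograph{M}$ is a finite $\microcosm{m}_{k}$-graphing with unit weights lying in $\pred{m_{k}}$, hence a $\microcosm{m}_{k}$-machine; uniformity of $\testdetneg$ makes $\mathcal{L}^{\testdetneg}(\autograph{M})$ well-defined, and the preceding proposition identifies it with the language of $\automaton{M}$. As this holds for every acyclic $k$-head automaton, every such language is computed by a $\microcosm{m}_{k}$-machine with respect to $\testdetneg$, which is the claim.
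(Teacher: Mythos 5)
Your overall route is exactly the paper's: the paper states this theorem with no proof at all, presenting it as an immediate consequence of the construction of $\autograph{M}$ and of the preceding proposition ($\word{w}\in\mathcal{L}^{\testdetneg}(\autograph{M})$ if and only if $\automaton{M}$ accepts $\word{w}$), together with the uniformity of $\testdetneg$. Your reduction of the remaining work to verifying that $\autograph{M}$ is a genuine $\microcosm{m}_{k}$-machine is the right one, and the structural checks (finiteness of the edge set, realisers of the form $\ttrm{p}_{(1,\sigma(i))}$ composed with integral translations, unit weights) are correct and are all the paper implicitly relies on.

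The problem is that you flag the crux --- membership of $\autograph{M}$ in $\pred{m_{k}}$, i.e. that $\autograph{M}\plug\de{a}$ is a well-defined project for \emph{every} $\de{a}$ in the biorthogonal $(\twwprojects)^{\pol\pol}$ and not merely for word representations --- announce that it is ``the main obstacle'' and the only place acyclicity is genuinely used, and then never discharge it: your final paragraph simply asserts membership in $\pred{m_{k}}$. As written this is an acknowledged gap, not a proof. To be fair, the paper is silent on exactly the same point (it calls $\autograph{M}$ ``a graphing in $\microcosm{m}_{k}$'' and thereafter treats it as a machine without ever checking membership in $\oc\ListType\multimap\NBool$), so you have isolated a real omission rather than missed an idea present in the text. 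Two smaller inaccuracies in your framing of that step: elements of $\oc\ListType$ have support $\wordsupport$, so they cannot literally ``connect outputs back to inputs''; and since $\NBool=\cond{T}_{\resultsupport}$ is the set of \emph{all} projects of support $\resultsupport$, the only thing to rule out is an infinite value of $\measbig{(0,\autograph{M}),\de{a}}$, which, with all weights of $\autograph{M}$ equal to $1=(1,0)$ and $m(x,y)=xy$ bounded by $1$, amounts to bounding a countable sum of circuit contributions each dominated by the measure of its support --- this is where the bounded alternating-path length guaranteed by acyclicity of $\automaton{M}$ must actually be brought to bear. Until that estimate is written out, the claim that $\autograph{M}$ is a machine, and hence the theorem, rests on plausibility rather than proof, in your write-up as in the paper's.
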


\subsection{Machines as Automata}

We will now describe how one can define a $i$-head automaton computing the same language as any $\microcosm{m}_{i}$-machine. For this purpose, we will first restrict our attention to \emph{essential graphings}; i.e. graphings whose edges are realised by specific maps that correspond to a single instruction. Although the translation could be defined on general $\microcosm{m}_{i}$-machines, this restriction will help ease the formalisation.

\begin{definition}
A $\microcosm{m}$-machine $M$ is \emph{$\Gamma$-essential} w.r.t. a generating set $\Gamma$ of the microcosm $\microcosm{m}$ if every edge $e\in E^{M}$ is realised by a restriction of a map in $\Gamma$.
\end{definition}

\begin{theorem}\label{essentiallemma}
Let $\Gamma$ be a set of measurable maps, $\microcosm{m}$ the microcosm generated by $\Gamma$, and $M$ a $\microcosm{m}$-machine. There exists a $\Gamma$-essential $\microcosm{m}$-machine $\bar{M}$ such that, for all test $\testfont{T}$, $\mathcal{L}^{\testfont{T}}(M)=\mathcal{L}^{\testfont{T}}(\bar{M})$.
\end{theorem}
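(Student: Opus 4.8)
The plan is to turn each edge of $M$ into a chain of edges, each realised by a single element of $\Gamma$, and then to argue that this rewriting is invisible to every test. Since $\microcosm{m}$ is the monoid generated by $\Gamma$, every realiser $\phi^{M}_{e}$ factors as a finite word $\gamma_{n_{e}}\circ\dots\circ\gamma_{1}$ with each $\gamma_{j}\in\Gamma$; as $M$ is finite, only finitely many such realisers occur and a single decomposition can be fixed once and for all. First I would introduce, for each edge $e$ and each $1\leqslant j<n_{e}$, a fresh dialect value $\ast^{e}_{j}$, and replace $e$ by the chain whose $j$-th link carries realiser $\gamma_{j}$, input state $\ast^{e}_{j-1}$ and output state $\ast^{e}_{j}$ (setting $\ast^{e}_{0}=\ttrm{i}^{M}_{e}$ and $\ast^{e}_{n_{e}}=\ttrm{o}^{M}_{e}$), and whose source is the forward image $(\gamma_{j-1}\circ\dots\circ\gamma_{1})(S^{M}_{e})$. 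Restricting the source in this way makes each link a genuine restriction of a single map of $\Gamma$, so the resulting graphing $\bar{M}$ is $\Gamma$-essential; it is finite, carries only weights equal to $1$, and has the same support as $M$, hence still belongs to $\pred{m}$.

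It then remains to show $\mathcal{L}^{\testfont{T}}(M)=\mathcal{L}^{\testfont{T}}(\bar{M})$ for every test $\testfont{T}$. Because all edges of $M$, of $\bar{M}$ and of the word graphings carry weight $1$, the measurement of a computation against a test $\testfont{T}=\{(t_{i},T_{i})\}$ reduces to $t_{i}+\meas{M\plug\de{w},T_{i}}$, a sum over circuits that each use an edge of $T_{i}$; consequently the orthogonality class of the computation depends only on the edges of the executed graphing, viewed as measurable maps on $\resultsupport$. It therefore suffices to prove that, for every word $\word{w}$ and every $\de{w}\in\repany{w}$, the graphings $M\plug\de{w}$ and $\bar{M}\plug\de{w}$ carry the same edges. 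I would establish this by a bijection between $\altpath{M,\de{w}}$ and $\altpath{\bar{M},\de{w}}$ that preserves both the composite realiser $\phi_{\pi}$ and the restricted source $S^{\decoupe C}_{\pi}$: a path of the second set is obtained by expanding each machine step $e$ into its chain of links, and conversely by \emph{contracting} each maximal run through the private states $\ast^{e}_{j}$ back to $e$. Since every $\ast^{e}_{j}$ has in-degree and out-degree one in $\bar{M}$, any alternating path entering such a run is forced to traverse it in full, so contraction is well defined and inverse to expansion, and the two executed graphings coincide on $\resultsupport$; they are then orthogonal to exactly the same projects, giving $M\plug\de{w}\in\testfont{T}^{\pol}$ iff $\bar{M}\plug\de{w}\in\testfont{T}^{\pol}$ for every $\testfont{T}$.

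The point I expect to be the main obstacle is precisely the tension between this expansion and the \emph{strictly alternating} nature of execution: two consecutive links of a chain are both edges of $\bar{M}$, whereas an alternating path must interleave edges of $\bar{M}$ with edges of $\de{w}$, so the correspondence cannot be a naive interleaving. The delicate verification is that the word graphing never interferes with the internal traversal of a chain, i.e. that the run through the states $\ast^{e}_{j}$ composes exactly to $\phi^{M}_{e}$ with no spurious word step inserted in between. I would secure this by arranging each intermediate link to land where $\oc\graphingtw{w}$ is inert for the relevant dialect value, so that the only admissible continuation past $\ast^{e}_{j}$ is the next link; equivalently, one may present $\bar{M}$ as the execution of a fixed $\Gamma$-essential \enquote{decomposition wiring} against $M$ along an auxiliary interface and transport the path correspondence through associativity of execution. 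Once this atomic-traversal property is in hand, the bijection above is immediate and the equality of languages follows, which is the claim.
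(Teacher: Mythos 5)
You have correctly located the crux --- the strict alternation in the definition of $\altpath{\bar{M},\de{w}}$ forbids two consecutive edges of $\bar{M}$ --- but the mechanism you propose to resolve it does not work, and this is precisely where the paper's proof does something different. Making the word graphing \emph{inert} at the landing set of an intermediate link cannot \enquote{force the only admissible continuation to be the next link}: the alternation condition is a constraint on which graphing the next edge is drawn from, not a choice among available continuations, so after the link $e_{1}$ the path \emph{must} take an edge of $\oc\graphingtw{w}$ or stop. If the word has no edge there, every path entering the chain terminates after its first link; and since that link's target lies inside the cut region $C=\wordsupport$, its restriction $[\pi]_{o}^{o}(C)$ is empty. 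The net effect is that the edge $e$ disappears from $\bar{M}\plug\de{w}$ altogether, so your claimed bijection between path sets fails at the first intermediate state. The paper's fix is the opposite of inertness: between consecutive links $\gamma_{j}$ and $\gamma_{j+1}$ one inserts a genuine interaction with the input --- the currently active head is moved back and forth on the word, so that a real edge of $\oc\graphingtw{w}$ separates the two machine links, the alternation constraint is satisfied, the intermediate landing sets stay inside the interface $\wordsupport$ (which your forward images $(\gamma_{j-1}\circ\dots\circ\gamma_{1})(S^{M}_{e})$ need not do), and the spurious moves cancel so that the composite along the completed chain is still $\phi^{M}_{e}$. This is the \enquote{stalling} the paper refers to, and it is the missing idea in your argument.

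Your fallback --- presenting $\bar{M}$ as the execution of a $\Gamma$-essential \enquote{decomposition wiring} against $M$ and invoking associativity --- does not repair this: execution composes realisers along paths, so the edges of such an execution are realised by composites of elements of $\Gamma$, i.e.\ by arbitrary elements of $\microcosm{m}$ again, and the result is not $\Gamma$-essential. The remainder of your argument (finiteness of the decomposition, weight $1$ everywhere, reduction of the language equality to equality of the executed graphings over $\resultsupport$, hence orthogonality to the same tests) is sound and matches what the paper needs; it is only the traversal of the chain under alternation that must be reworked along the lines above.
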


\begin{proof}
The proof is technical but not difficult. The principle is the following: one considers an extended dialect and then decomposes each edge that is not realised by an element of $\Gamma$ by a series of edges using specific new states (i.e. newly added elements of the dialect) and going back and forth on the input with the currently active head to stall the computation.
\end{proof}

The following is a technical lemma that uses some particular properties of the microcosm $\microcosm{m}_{\infty}$. This lemma is the equivalent, on our framework, to the so-called technical lemma which was essential in previous work involving operator algebras \cite{seiller-conl,seiller-lsp}.

\begin{lemma}[Technical Lemma]\label{technicallemma}
Let $M$ be a $\microcosm{m}_{\infty}$-machine. The computation of $M$ with a representation $\oc W$ of a word $\word{w}$ is the realisation by translations of a $\Omega$-weighted finite graph.
\end{lemma}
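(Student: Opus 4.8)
The plan is to replace the infinitely-presented object $M\plug\oc W$ by a genuinely finite graph, and then to show that the coordinate-permutations carried by its edges cancel, leaving only translations. First I would reduce to a fixed finite $k$: since $M$ is finite and $\microcosm{m}_{\infty}=\cup_{i}\microcosm{m}_{i}$, all of its finitely many edges already lie in a single $\microcosm{m}_{k}$. As $\microcosm{m}_{k}$ is generated by the integer translations $\ttrm{t}_{z}$ (acting on the $\integerN$-factor) and the permutations $\ttrm{p}_{\sigma}$ of the first $k$ coordinates (acting on $[0,1]^{\naturalN}$), and these two families commute, each edge of $M$ is realised by a unique pair $(z_{e},\sigma_{e})\in\integerN\times\mathfrak{G}_{k}$. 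On the other side, each edge of $\oc W$ is realised by an integer translation composed with a fractional shift of the first coordinate by a multiple of $1/n$, where $n=\lg\word{w}$, supported on the cell where that coordinate encodes a fixed word-position; in particular $\oc W$ carries no permutation.

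Next I would record the structural fact that, since $M\in\oc\ListType\multimap\NBool$ while $\oc W$ realises the cut formula, the execution $M\plug\oc W$ is supported on $\resultsupport=\support{\ttrm{a}}\cup\support{\ttrm{r}}$. Along an alternating path $\pi$, because integer translations, first-coordinate fractional shifts and permutations of the first $k$ coordinates commute up to relabelling, the realiser factors as $\phi_{\pi}=\ttrm{t}_{Z_{\pi}}\circ\ttrm{p}_{\rho_{\pi}}\circ G_{\pi}$, where $Z_{\pi}\in\integerN$, the map $G_{\pi}$ is a product of fractional translations of the first $k$ coordinates each by a multiple of $1/n$, and $\rho_{\pi}\in\mathfrak{G}_{k}$ is the product of the permutations $\sigma_{e}$ read off the $M$-edges of $\pi$. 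Both $\ttrm{t}_{Z_{\pi}}$ and $G_{\pi}$ are already translations of $\measure{X}$; the sole obstruction to $\phi_{\pi}$ being a translation is the permutation $\rho_{\pi}$.

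For finiteness I would restrict to the region $Y=[0,1/n]^{k}\times[0,1]^{\naturalN}$, in which the first $k$ coordinates all encode position $0$. Both the $\oc W$-shifts (multiples of $1/n$) and the permutations preserve the residual $\{n s_{i}\}$ of each coordinate, so within $Y$ the source of every path is a union of product cells indexed by the finitely many machine-configurations (dialect state, head-positions, residual pattern). Since source and target lie in $\resultsupport$ and the word occupies only finitely many integer positions, $Z_{\pi}$ ranges over finitely many values, $G_{\pi}$ over multiples of $1/n$ modulo $1$, and $\rho_{\pi}$ over the finite group $\mathfrak{G}_{k}$; hence the triple $(Z_{\pi},\rho_{\pi},G_{\pi})$ takes only finitely many values. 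As all weights equal $1$, paths with equal data realise the same map and may be merged by refinement into a single edge whose source is the union of theirs, presenting $M\plug\oc W$ as a $\Omega$-weighted finite graph.

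The hard part will be to show that this finite graph is realised by \emph{translations}, i.e.\ that $\rho_{\pi}$ acts trivially on every non-negligible cell. The argument I would give is that the permutations telescope: each $M$-edge realises exactly the transposition by which the head-reindexing recorded in the control data changes, so along a path the net permutation collapses to the product of the initial and final reindexings, and the design constraint that all heads return to the left end-marker at the initial and at the accepting/rejecting configurations — the only ones meeting $\resultsupport$ — forces both to be $\identity$, whence $\rho_{\pi}=\identity$; the restriction to $Y$ is precisely what aligns the residual bookkeeping with this reset. The delicate point, and the main obstacle, is to carry this telescoping through for a \emph{general} $\microcosm{m}_{\infty}$-machine rather than only the canonical $\autograph{M}$: here one uses that $\support{\ttrm{a}}$ and $\support{\ttrm{r}}$ are full products $\Psi(\cdot)\times[0,1]^{\naturalN}$, invariant under coordinate permutations and carrying permutation-invariant measure, so that replacing $\phi_{\pi}$ by its translation part $\ttrm{t}_{Z_{\pi}}\circ G_{\pi}$ leaves every measurement — hence the graphing up to universal equivalence — unchanged. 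The remaining verifications, that merged edges have positive-measure source and that the construction is compatible with the refinement quotient, are routine.
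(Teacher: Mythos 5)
Your overall skeleton --- reduce to a finite $\microcosm{m}_{k}$, discretise $[0,1]^{k}$ into cubes of side $1/n$ (where $n=\lg(\word{w})$), observe that every realiser of $M$ and of $\oc W$ permutes these finitely many cells, and read off a finite graph --- is the same as the paper's, and that part is sound. The gap is in what you yourself flag as the hard part. The paper never claims, and does not need, that the net permutation $\rho_{\pi}$ along a path is the identity: it only records, for each edge and each cube in its source, which cube is hit, and the ``translation'' of the lemma is the induced map between cubes; every later use of the lemma is about existence of alternating paths and cycles, for which this cube-level data suffices. Your telescoping argument is really an argument about the canonical encoding $\autograph{M}$ --- it uses the head-reindexing bookkeeping and the convention that heads return to the left end-marker --- whereas the lemma quantifies over \emph{arbitrary} finite graphings in $\pred{m_{\infty}}$. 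Nothing prevents such a machine from containing a single edge from $\support{\ttrm{a}}$ to $\support{\ttrm{r}}$ realised by a bare coordinate swap $\ttrm{p}_{(1,2)}$ composed with an integer translation; the corresponding one-edge path has $\rho_{\pi}\neq\identity$ on a set of full measure, so the assertion that $\rho_{\pi}$ acts trivially on every non-negligible cell is false in general.

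Your fallback --- that replacing $\phi_{\pi}$ by its translation part $\ttrm{t}_{Z_{\pi}}\circ G_{\pi}$ leaves every measurement unchanged because $\support{\ttrm{a}}$ and $\support{\ttrm{r}}$ are permutation-invariant --- does not close this gap. The measurement integrand is $m(\omega(\pi)^{\rho_{\phi_{\pi}}(x)})/\rho_{\phi_{\pi}}(x)$, and the orbit-length function $\rho_{\phi_{\pi}}$ is not preserved when a permutation factor is dropped: a cycle realised by a coordinate swap has almost every point of period $2$, while its translation part has every point of period $1$, so the two graphings are not universally equivalent and invariance of the supports is beside the point. Two further remarks: the lemma describes all of $M\plug\oc W$, so you must work with the full partition of $[0,1]^{k}$ into the $n^{k}$ cubes $\bigtimes_{i=1}^{k}[k_{i}/n,(k_{i}+1)/n]$ rather than restricting to $Y=[0,1/n]^{k}\times[0,1]^{\naturalN}$ (that restriction belongs to the trace--path correspondence of \autoref{tracespaths}, not here); and once the cube decomposition is in place the entire permutation-cancellation step can simply be deleted, since the finite graph is obtained directly from the induced action of the realisers on the cubes, exactly as in the paper.
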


\begin{proof}
The proof of this lemma is based on the finiteness of $\microcosm{m}_{\infty}$-machines. Since $M$ is a finite graphing, there exits an integer $N$ such that $M$ is a $\microcosm{m}_{N}$-machine. We are thus left to prove the result for $M$ a $\microcosm{m}_{N}$-machine. We now pick a word $\word{w}\in\Sigma^{\ast}$, write $k$ the length of $\word{w}$ and $(0,W_{\word{w}})$ the project $(0,\oc \graphingtw{w})$. Let us remark that all maps realising edges in $M$ or in $\oc \graphingtw{w}$ are of the form $\phi\times\identity[\bigtimes_{i=N+1}^{\infty}{[0,1]}]$. We can therefore consider that the underlying space is $\integerN\times[0,1]^{N}$ instead of $\measure{X}$ by just replacing realisers $\phi\times\identity[\bigtimes_{i=N+1}^{\infty}{[0,1]}]$ by $\phi$. Moreover, the maps $\phi$ here act either as permutations over copies of $[0,1]$ (realisers of edges of $M$) or as permutations over a decomposition of $[0,1]$ into $k$ intervals (realisers of $\oc \graphingtw{w}$). Consequently, all realisers act as permutations over the set of $N$-cubes $\{\bigtimes_{i=1}^{N}[k_i/k,(k_{i}+1)/k]~|~0\leqslant k_i\leqslant k-1\}$, i.e. their restrictions to $N$-cubes are translations. 

Consequently, one can build two (thick\footnote{Thick graphs are graphs with dialects, where dialects act as they do in graphings, i.e.\ as control states.}) graphs $\bar{M}$ and $\bar{W}_{\word{w}}$ over the set of vertices $\ext{\Sigma}\times\{\bigtimes_{i=1}^{N}[k_i/k,(k_{i}+1)/k]~|~0\leqslant k_i\leqslant k-1\}$ as follows. There is an edge in $\bar{M}$ of source $(s,(k_{i})_{i=1}^{N},d)$ to $(s',(k'_{i})_{i=1}^{N},d')$ if and only if there is an edge in $M$ of source $\bracket{s}\times\{d\}$ and target $\bracket{s'}\times\{d'\}$ whose realisation send the $N$-cube $\bigtimes_{i=1}^{N}[k_i/k,(k_{i}+1)/k]$ onto the $N$-cube $\bigtimes_{i=1}^{N}[k'_i/k,(k'_{i}+1)/k]$. There is an edge in $\bar{W}_{\word{w}}$ of source $(s,(k_{i})_{i=1}^{N},d)$ to $(s',(k'_{i})_{i=1}^{N},d')$ if and only if $d=d'$, $k_i=k'_i$ for $i\geqslant 2$ and there is an edge in $W_{\word{w}}$ of source $\bracket{s}\times[k_1/k,(k_1+1)/k]\times[0,1]^{\naturalN}$ and target $\bracket{s'}\times[k'_1/k,(k'_1+1)/k]\times[0,1]^{\naturalN}$.

Then, checking the existence of an alternating path between $M$ and $\oc \graphingtw{w}$ turns out to be equivalent to the existence of an alternating path between $\bar{M}$ and $\graphtw{w}$.
\end{proof}

This lemma will be useful because of the following proposition.

\begin{proposition}\label{orthogonalityandcycles}
For any $\microcosm{m}$-machine $G$ and word representation $\oc W$, $G\plug\oc W$ is orthogonal to $\testdetneg$ if and only if there are no cycles between $G$ and $\oc W\otimes \identity[\support{\ttrm{r}}]$ going through $\support{\ttrm{r}}$.
\end{proposition}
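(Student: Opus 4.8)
The plan is to first collapse orthogonality to the single scalar condition $\meas{R,\identity[\support{\ttrm{r}}]}=0$, where I write $R=G\plug\oc W$, and then to transport this measurement from the executed graphing $R$ back to the pair $(G,\oc W\otimes\identity[\support{\ttrm{r}}])$ by unfolding its circuits into alternating paths.

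First I would unwind the orthogonality. The computation $G\plug\oc W$ is the project $(0,R)$ of support $\resultsupport$, and $\testdetneg$ consists of the projects $\de{t}^{-}_{\zeta}=(\zeta,\identity[\support{\ttrm{r}}])$ for $\zeta\neq0$. By the measurement formula on projects,
\[
\measbig{(0,R),(\zeta,\identity[\support{\ttrm{r}}])}=\zeta+\meas{R,\identity[\support{\ttrm{r}}]}.
\]
Hence $(0,R)\poll{}\de{t}^{-}_{\zeta}$ for every $\zeta\neq0$ holds iff $\zeta+\meas{R,\identity[\support{\ttrm{r}}]}\notin\{0,\infty\}$ for all such $\zeta$, which is equivalent to $\meas{R,\identity[\support{\ttrm{r}}]}=0$: were it infinite the sum would stay infinite, and were it finite and nonzero the choice $\zeta=-\meas{R,\identity[\support{\ttrm{r}}]}$ would break orthogonality. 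So $G\plug\oc W$ is orthogonal to $\testdetneg$ exactly when $\meas{R,\identity[\support{\ttrm{r}}]}=0$.

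Next I would relate $\meas{R,\identity[\support{\ttrm{r}}]}$ to the circuits of the pair $(G,\oc W\otimes\identity[\support{\ttrm{r}}])$. Since $\identity[\support{\ttrm{r}}]$ has a single edge on $\support{\ttrm{r}}$, every element of $\repcirc{R,\identity[\support{\ttrm{r}}]}$ is a cyclic alternation of that identity edge with edges of $R$, and each edge of $R$ is by construction $[\pi]_{o}^{o}(C)$ for an alternating path $\pi$ between $G$ and $\oc W$ over the cut $C$. Expanding each $R$-edge into its underlying path turns such a circuit into a cyclic alternation of $G$-edges with edges that are either $\oc W$-edges or the identity edge, i.e.\ a circuit between $G$ and $\oc W\otimes\identity[\support{\ttrm{r}}]$ that traverses the identity edge; conversely, cutting any such circuit at each occurrence of the identity edge and regrouping the maximal intervening $G$-$\oc W$ segments as $R$-edges inverts the construction. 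As composition of maps and multiplication of weights are associative, this bijection preserves the realised map $\phi_{\pi}$, the source $\supp{\pi}$, the return time $\rho_{\phi_{\pi}}$ and the weight $\omega_{\pi}$, hence it preserves each summand of the measurement; this is precisely the circuit-level shadow of the associativity of execution.

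It then remains to read off the geometric condition and to check strict positivity, which I expect to be the main obstacle. On the $\oc W\otimes\identity[\support{\ttrm{r}}]$ side the only edge meeting $\support{\ttrm{r}}$ is the identity edge, so a cycle between $G$ and $\oc W\otimes\identity[\support{\ttrm{r}}]$ goes through $\support{\ttrm{r}}$ exactly when it uses that edge; by the bijection above these are exactly the circuits counted by $\meas{R,\identity[\support{\ttrm{r}}]}$. Each such summand is nonnegative, and strictly positive: the identity edge carries the weight $\mathbf{1}$, so the circuit weight is flagged and the parameter map $m$ evaluates to a positive value on the positive-measure source of the circuit. Therefore $\meas{R,\identity[\support{\ttrm{r}}]}=0$ iff there is no cycle between $G$ and $\oc W\otimes\identity[\support{\ttrm{r}}]$ through $\support{\ttrm{r}}$, which combined with the first step yields the claim. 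The two delicate points are the faithfulness of the unfolding bijection, namely that regrouping $G$-$\oc W$ subpaths respects the source-restriction $[\pi]_{o}^{o}(C)$ and the cyclic-representative bookkeeping of $\repcirc{\cdot,\cdot}$ (the genuine associativity content), and the strict positivity just noted, which rests on the test edge weight $\mathbf{1}$ together with the positive measure of circuit sources.
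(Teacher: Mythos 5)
Your argument is correct and lands on the same reduction as the paper, but it gets there by a more hands-on route. The paper's proof is essentially two lines: it invokes the \emph{trefoil property} of the measurement, $\meas{(0,G)\plug(0,\oc W),(t,T)}=\meas{(0,G),(0,\oc W)\plug(t,T)}$, cited from the earlier interaction-graphs papers, and then observes that since the supports of $\oc W$ and the test are disjoint one has $(0,\oc W)\plug(t,T)=(0,\oc W)\otimes(t,T)$; the conclusion then follows exactly as in your first and last steps (orthogonality collapses to $\zeta+\meas{G,\oc W\otimes \identity[\support{\ttrm{r}}]}\neq 0,\infty$ for all $\zeta\neq 0$, and a circuit has flagged weight $a\cdot\mathbf{1}$ iff it uses the unique flagged edge, i.e.\ iff it goes through $\support{\ttrm{r}}$). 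What you call the ``circuit-level shadow of the associativity of execution'' --- the unfolding bijection between circuits of $(G\plug\oc W,\identity[\support{\ttrm{r}}])$ and circuits of $(G,\oc W\otimes\identity[\support{\ttrm{r}}])$ traversing the identity edge, compatible with $[\pi]_{o}^{o}(C)$, with the choice of representatives in $\repcirc{\cdot,\cdot}$ and with return times --- is precisely the content of the trefoil property, so the two ``delicate points'' you flag are genuine but are discharged wholesale by citing that result rather than re-deriving it; your version is more self-contained, the paper's buys brevity. Two small things you should make explicit: identifying the computation with the project $(0,R)$ uses $\meas{G,\oc W}=0$, which holds because all weights there equal $1$ and $m(1)=0$; and your strict-positivity claim additionally needs the return time $\rho_{\phi_{\pi}}$ to be finite on a non-null subset of the circuit's support (a circuit with a.e.\ infinite return time contributes $0$ to the measurement) --- this holds here because the realisers involved act as finite permutations of cubes, but it does not follow from the weight $\mathbf{1}$ alone.
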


\begin{proof}
We use here the \emph{trefoil property} for graphings \cite{seiller-goig}, which in this case translates as $\meas{(0,G)\plug(0,\oc W),(t,T)}=\meas{(0,G),(0,\oc W)\plug(t,T)}$. Since the support of $\oc W$ and the test are disjoint, we have the equality $(0,\oc W)\plug(t,T)=(0,\oc W)\otimes(t,T)$. Hence $G\plug\oc W$ is orthogonal to $\testdetneg$ if and only if $G$ is orthogonal to $(0,\oc W)\otimes (\zeta,\identity[\support{\ttrm{r}}])$ for all $\zeta\neq 0$. But $(0,\oc W)\otimes (\zeta,\identity[\support{\ttrm{r}}])=(\zeta,\oc W\otimes \identity[\support{\ttrm{r}}])$. Thus $G\plug\oc W$ is orthogonal to $\testdetneg$ if and only if $\zeta+\meas{G,\oc W\otimes \identity[\support{\ttrm{r}}]}\neq 0,\infty$, i.e. if and only if there are no alternating cycles between $G$ and $\oc W\otimes \identity[\support{\ttrm{r}}]$ of weight of the form $a\cdot \mathbf{1}$. Finally, since all weights in $G$ and $\oc W$ are equal to $1$, such cycles need to go through $\support{\ttrm{r}}$.
\end{proof}

Using these results, we can show the wanted inclusion (i.e. completeness of the model). For this we consider a $\Gamma_{N}$-essential $\microcosm{m}_{N}$-machine $G$  where $\Gamma_{N}$ is the subset of $\microcosm{m}_{N}$ in which all permutation-induced transformations are of the form $\ttrm{p}_{\tau_{i,j}}$ where $\tau_{i,j}$ denotes the transposition exchanging $1$ and $j$. We then construct an automaton $\graphauto{G}$ that computes the same language as $G$. We will build the automaton so that it follows the alternating paths between $M$ and $\oc W\otimes\identity[\support{\ttrm{r}}]$ starting in $\support{\ttrm{r}}$, using the fact that this can be done by following the paths between finite graphs $\bar{M}$ and $\bar{W}_{\word{w}}\otimes\identity$ using \autoref{technicallemma}.

We construct the automaton $\graphauto{G}$ as follows. Let $Q$ denote the dialect of the thick graphing $M$. We denote by $\mathfrak{I}$ the set of vertices $(\support{\ttrm{r}},q)$, with $q\in Q$, which are both a source and a target of edges in $M$. Any cycle going through $\support{\ttrm{r}}$ will go through at least one element of $\mathfrak{I}$. Notice however, that such a cycle may go through several elements of $\mathfrak{I}$, i.e. the cycle may go through the test several times before reaching its initial vertex.

If $\mathfrak{I}$ is empty, then $\mathcal{L}^{\testfont{T}}(M)=\emptyset$ which is clearly computed by an automaton with at most $N$ heads. We now suppose that $\mathfrak{I}\neq\emptyset$. We will build an automaton $\graphauto{G}$ whose set of states is equal to $Q\times\mathfrak{G}_{N}\times\mathfrak{I}\times\{\star,0,1\}^{N}$. The permutations in $\mathfrak{G}_{N}$ will be used to keep track of the exchanges of heads during the computation. The sequences in $\{\star,0,1\}^{N}$ will be used to remember the starting positions of the heads: indeed a cycle has to go back not only to its initial state but to its initial heads' positions as well. 

Notice that the choice of an element of $q$ of the dialect together with a sequence in $\{\star,0,1\}^{N}$ corresponds to the choice of a vertex in the graph $\bar{G}$. Notice also that all edges are realised by a transposition $\ttrm{p}_{\tau_{1,j}}$ composed with a bijection on $\naturalN$; we abusively say that the edge is realised by the transposition to lighten the definition of the automaton.

\noindent We now define the transition relation of the automaton.
\begin{itemize}[nolistsep,noitemsep]
\item Each edge $e$ in $M$, of source $(\support{\ttrm{r}},q)$ with $q\in\mathcal{I}$ and target $(\support{(s',d')},q')$ realised by $\tau_{1,j}$ is represented by the family of transitions 
\[(\vec{a},(q,\sigma,i,\vec{s}))\rightarrow(\sigma(j),d',(q',\tau_{1,j}\circ\sigma,P\cup\{q\},i,\vec{s}))\]
 for $\vec{a}$ such that $a_{\sigma(j)}=s'$.
\item Each edge $e$ in $M$, of source $(\support{(s,d)},q)$ and target $(\support{(s',d')},q')$ realised by $\tau_{1,j}$ is represented by the family of transitions 
\[(\vec{a},(q,\sigma,i,\vec{s})\rightarrow(\sigma(j),d',(q',\tau_{1,j}\circ\sigma,i,\vec{s}))\]
 for all $\vec{a}$ such that $a_{\sigma(j)}=s'$ and $a_{\sigma(1)}=s$.
\item Each edge $e$ in $M$, of source $(\support{(s,d)},q)$ and target $(\support{\ttrm{r}},q)$ with $q\in\mathfrak{I}$ realised by $\tau_{1,j}$ is represented by:
\begin{itemize}[nolistsep,noitemsep]
\item the family of transitions $(\vec{a},(q,\sigma,i,\vec{s}))\rightarrow(\sigma(j),d',(q',\tau_{1,j}\circ\sigma,i,\vec{s}))$ for $\vec{a}$ such that $a_{\sigma(1)}=s$ and $a_{\sigma(j)}=s'$ and $q\neq i$;
\item the family of transitions $(\vec{a},(q,\sigma,i,\vec{s}))\rightarrow\reject$ for $\vec{a}=\vec{s}$ and $i=q$;
\end{itemize}
\item For each $i\in\mathfrak{I}$ and $\vec{s}\in\{\star,0,1\}^{N}$, there is a transition $(\vec{a},\init)\rightarrow (\vec{a},(i,\identity,i,\vec{a}))$.
\end{itemize}

\begin{definition}
For all integer $N$ and $\Gamma_{N}$-essential $\microcosm{m}_{N}$-machine $G$, we denote $\graphauto{G}$ the $N$-head automaton described above. 
\end{definition}

The reader can convince herself it is a consequence of the definition of $\autograph{M}$ that, given a word $\word{w}$ as input, it follows nondeterministically all alternating paths between $\bar{G}$ and $\graphtw{w}\otimes\identity[\bar{\ttrm{r}}]$ where $\bar{\ttrm{r}}=\{\ttrm{r}\}\times\{\bigtimes_{i=1}^{N}[k_i/k,(k_{i}+1)/k]~|~0\leqslant k_i\leqslant k-1\}$. From this fact and the fact that such a cycle has to go through one of the vertices in $\bar{\ttrm{r}}$, we obtain the following proposition.

\begin{proposition}\label{cyclestoautomaton}
Let $G$ be a $\Gamma_{N}$-essential $\microcosm{m}_{N}$-machine, $\word{w}$ a word and $\oc W$ be a word representation of $\word{w}$. There is an alternating cycle between $G\plug \oc W$ and $\identity[\support{\ttrm{r}}]$ going through $\support{\ttrm{r}}$ if and only if the automaton $\graphauto{G}$ rejects when given $\word{w}$ as input.
\end{proposition}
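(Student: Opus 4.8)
The plan is to prove the equivalence by a step-by-step correspondence routed through the finite thick graphs supplied by \autoref{technicallemma}. The argument splits into two halves: first, reducing the statement about alternating cycles in the (infinite) graphings to one about cycles in finite thick graphs; second, matching those finite cycles against the reject-traces of $\graphauto{G}$, whose transition relation was designed for exactly this purpose. Throughout I write $\bar{\ttrm{r}}=\{\ttrm{r}\}\times\{\bigtimes_{i=1}^{N}[k_i/k,(k_{i}+1)/k]~|~0\leqslant k_i\leqslant k-1\}$ as in the remark preceding the statement.

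First I would unfold the left-hand side. By the trefoil computation already used in the proof of \autoref{orthogonalityandcycles}, an alternating cycle between $G\plug\oc W$ and $\identity[\support{\ttrm{r}}]$ going through $\support{\ttrm{r}}$ is in bijection with an alternating cycle between $G$ and $\oc W\otimes\identity[\support{\ttrm{r}}]$ going through $\support{\ttrm{r}}$, using that $\oc W$ and the test have disjoint support so that $\oc W\plug\identity[\support{\ttrm{r}}]=\oc W\otimes\identity[\support{\ttrm{r}}]$. Since $G$ is a $\Gamma_{N}$-essential $\microcosm{m}_{N}$-machine, \autoref{technicallemma} applies: every realiser occurring in $G$ or in $\oc W$ acts as a permutation of the $N$-cubes $\bigtimes_{i=1}^{N}[k_i/k,(k_{i}+1)/k]$, and the test contributes only identity edges (which trivially act as translations on the cube carrying $\support{\ttrm{r}}$), so the reduction of the Technical Lemma extends verbatim. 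Hence the alternating cycles between $G$ and $\oc W\otimes\identity[\support{\ttrm{r}}]$ are in bijection with the alternating cycles between the finite thick graphs $\bar{G}$ and $\graphtw{w}\otimes\identity[\bar{\ttrm{r}}]$, a cycle meeting $\support{\ttrm{r}}$ corresponding to a finite cycle meeting $\bar{\ttrm{r}}$; and, as noted before the statement, any such finite cycle passes through at least one vertex of $\mathfrak{I}$.

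Then I would match these finite cycles with reject-traces of $\graphauto{G}$. For the backward direction, reading off the transition relation shows that any computation reaching $\reject$ must begin with an initial transition $(\vec{a},\init)\rightarrow(\vec{a},(i,\identity,i,\vec{a}))$ that nondeterministically guesses a basepoint $i\in\mathfrak{I}$ together with its head positions $\vec{a}$ (recorded in the last component), then follow edges of $\bar{G}$ against the letters read by the heads while tracking head reindexings in the permutation component $\sigma\in\mathfrak{G}_{N}$, and finally fire the reject transition, which is enabled precisely when the trace has returned to the recorded start, i.e. when $i=q$ and $\vec{a}=\vec{s}$. The sequence of edges traversed is then a closed alternating path in $\bar{G}$ against $\graphtw{w}\otimes\identity[\bar{\ttrm{r}}]$ beginning and ending at the same vertex of $\bar{\ttrm{r}}$, i.e. a cycle through $\bar{\ttrm{r}}$. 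Conversely, given such a finite cycle, I would fix one of its vertices lying in $\mathfrak{I}$ as basepoint, start the automaton with the matching initial transition, and run the transitions dictated by the successive edges; the bookkeeping of $\sigma$ and of the recorded head positions guarantees that the guards $a_{\sigma(j)}=s'$ and $a_{\sigma(1)}=s$ are met at each step and that, upon first return to the basepoint, the reject transition fires. Composing with the bijection of the previous paragraph yields the asserted equivalence.

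The hard part will be the faithful bookkeeping in this trace correspondence: one must verify that the permutation component $\sigma$ correctly records which physical head plays the role of the principal head at each step, so that the guard conditions on the letters $\vec{a}$ coincide with the source/target-compatibility conditions defining alternating paths in $\bar{G}$; and one must handle the subtlety, flagged before the statement, that a cycle through $\support{\ttrm{r}}$ may visit $\mathfrak{I}$ several times. The recorded basepoint $i$ and head positions $\vec{s}$ are exactly what resolve this, since the reject transition fires only on the first return to the chosen basepoint: once a basepoint in $\mathfrak{I}$ is fixed, each finite cycle is detected by a unique reject-trace, and distinct reject-traces correspond to genuinely distinct closed paths. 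These two verifications are routine but notation-heavy; everything else follows directly from \autoref{technicallemma} and the trefoil computation underlying \autoref{orthogonalityandcycles}.
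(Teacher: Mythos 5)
Your proposal is correct and follows essentially the same route as the paper, which itself only sketches the argument: reduce via \autoref{technicallemma} to alternating cycles between the finite thick graphs $\bar{G}$ and $\graphtw{w}\otimes\identity[\bar{\ttrm{r}}]$, observe that any such cycle must pass through a vertex of $\mathfrak{I}$, and read off the correspondence with reject-traces directly from the transition relation of $\graphauto{G}$. Your write-up merely makes explicit the bookkeeping (basepoint guessing, the permutation component, first return) that the paper leaves to the reader.
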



\begin{theorem}\label{completeness}
Any language computed by a $\microcosm{m}_{N}$-machine w.r.t. $\testdetneg$ is computed by a deterministic $N$-head automaton.
\end{theorem}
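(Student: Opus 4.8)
The plan is to assemble the results already proved, the only genuinely new work being the passage to an essential machine and the removal of nondeterminism from the automaton that is produced. First I would reduce to the essential case. Let $\Gamma_{N}$ be the generating set of $\microcosm{m}_{N}$ consisting of the translations $\ttrm{t}_{z}$ together with the maps $\ttrm{p}_{\tau_{1,j}}$; since the transpositions $\tau_{1,j}$ generate $\mathfrak{G}_{N}$, the set $\Gamma_{N}$ generates $\microcosm{m}_{N}$, so \autoref{essentiallemma} applies and yields, for any $\microcosm{m}_{N}$-machine $M$, a $\Gamma_{N}$-essential $\microcosm{m}_{N}$-machine $G$ with $\mathcal{L}^{\testfont{T}}(M)=\mathcal{L}^{\testfont{T}}(G)$ for every test, in particular for $\testfont{T}=\testdetneg$. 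It therefore suffices to treat $\Gamma_{N}$-essential machines, for which the $N$-head automaton $\graphauto{G}$ has already been constructed.

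Next I would chain the equivalences for a fixed word $\word{w}$ with representation $\oc W$. Since $\testdetneg$ is uniform with respect to $\microcosm{m}_{\infty}$-machines, hence to $\microcosm{m}_{N}$-machines, the membership $\word{w}\in\mathcal{L}^{\testdetneg}(G)$ holds exactly when $G\plug\oc W$ is orthogonal to $\testdetneg$. By \autoref{orthogonalityandcycles} this orthogonality is the absence of alternating cycles through $\support{\ttrm{r}}$; the associativity of execution (the trefoil property already invoked in that proof) identifies the cycles between $G$ and $\oc W\otimes\identity[\support{\ttrm{r}}]$ occurring there with the cycles between $G\plug\oc W$ and $\identity[\support{\ttrm{r}}]$ of \autoref{cyclestoautomaton}, whose existence the latter proposition equates with the existence of a rejecting trace of $\graphauto{G}$ on $\word{w}$. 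Composing these equivalences with the acceptance convention (accept if and only if there is no rejecting trace) gives $\word{w}\in\mathcal{L}^{\testdetneg}(G)$ if and only if $\graphauto{G}$ accepts $\word{w}$, so that $\mathcal{L}^{\testdetneg}(M)=\mathcal{L}^{\testdetneg}(G)$ is precisely the language recognised by $\graphauto{G}$.

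The hard part will be to upgrade $\graphauto{G}$ to a \emph{deterministic} automaton. By \autoref{technicallemma} the computation is carried out in a finite graph realised by translations, whose vertices are indexed by $N$-cubes, and, the word graphing being functional, the search for a rejecting cycle amounts to following forward orbits through $\support{\ttrm{r}}$. The crux I would have to secure is that this forward dynamics is a partial injection, so that the orbit of each reject-configuration is \emph{uniquely} determined and either returns to its start (a rejecting cycle) or reaches a dead end; this is exactly the property that permits a deterministic search, and it is where the genuine content of the statement lies. Granting it, the only remaining nondeterminism of $\graphauto{G}$ is the initial choice of a vertex $i\in\mathfrak{I}$ and of initial head positions in $\{\star,0,1\}^{N}$, a finite set encodable in the control states; a deterministic $N$-head automaton can then enumerate these candidate reject-configurations, follow each unique orbit to completion while repositioning its heads between trials, and reject as soon as one orbit closes up. This sequentialisation removes the last source of nondeterminism without increasing the number of heads, delivering the required deterministic $N$-head automaton and completing the proof.
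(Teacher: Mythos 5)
Your first two paragraphs reproduce the paper's own argument: reduce to a $\Gamma_{N}$-essential machine via \autoref{essentiallemma}, then chain \autoref{cyclestoautomaton} and \autoref{orthogonalityandcycles} (together with the uniformity of $\testdetneg$) to conclude that $\graphauto{G}$ rejects $\word{w}$ exactly when $\word{w}\not\in\mathcal{L}^{\testdetneg}(G)$. That part is sound, and your remark that the trefoil property is needed to identify the alternating cycles between $G$ and $\oc W\otimes\identity[\support{\ttrm{r}}]$ with those between $G\plug\oc W$ and $\identity[\support{\ttrm{r}}]$ makes explicit a step the paper leaves implicit.

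The gap is in your third paragraph. A $\microcosm{m}_{N}$-machine is by definition a \emph{nondeterministic} finite graphing: a point of the space may lie in the source of several edges, so the forward dynamics on the finite graph produced by \autoref{technicallemma} is a relation, not a partial injection. The orbit of a reject-configuration is therefore a branching tree of alternating paths, and \enquote{following each unique orbit to completion} is not available; deciding whether \emph{some} branch closes up is a reachability question, and eliminating that nondeterminism while keeping $N$ heads would collapse the nondeterministic multihead hierarchy onto the deterministic one (at the level of the union it would give $\Logspace=\coNLogspace$), which is certainly not something this construction delivers. You correctly identify this injectivity as \enquote{the crux}, but you then grant it rather than prove it, and it fails in general. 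Note that the paper's own proof does not establish determinism either: the automaton $\graphauto{G}$ is genuinely nondeterministic (families of transitions, plus the nondeterministic initial choice of an element of $\mathfrak{I}$ and of a tuple in $\{\star,0,1\}^{N}$), and its proof stops at exactly the equivalence you establish in your second paragraph. The word \enquote{deterministic} in the statement appears to be a slip: the concluding theorem $\pred{m_{N}}=\text{\cctwconfa{N}}$ only requires a nondeterministic $N$-head automaton under the \enquote{no rejecting computation trace} acceptance used to define \cctwconfa{N}. So the correct repair is to drop the determinization attempt rather than to complete it.
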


\begin{proof}
The proof consists in combining previous statements. Let $G$ be a $\microcosm{m}_{N}$-machine. Then there exists a $\Gamma_{N}$-essential $\microcosm{m}_{N}$-machine $H$ such that $\mathcal{L}^{\testdetneg}(G)=\mathcal{L}^{\testdetneg}(H)$. Now, we have defined the automaton $\autograph{H}$ which, by \autoref{cyclestoautomaton}, rejects an input $\word{w}$ if and only if there is an alternating path between $H$ and $\graphingtw{w}\otimes\identity[\support{\ttrm{r}}]$ going through $\support{\ttrm{r}}$. But this is equivalent, by \autoref{orthogonalityandcycles}, to the fact that $H\plug\oc\graphingtw{w}$ is not orthogonal to $\testdetneg$. Summing up, we have shown that $\autograph{M}$ rejects $\word{w}$ if and only if $\word{w}\not\in\mathcal{L}^{\testdetneg}(G)$.
\end{proof}

%
%
%
%
%
%
%


\section{Conclusion}

Combining \autoref{soundness} and \autoref{completeness}, we obtain the characterisation of the hierarchy of sublinear complexity classes announced in the introduction.

\begin{theorem}
For all $i\in\naturalN^{\ast}\cup\{\infty\}$, $\pred{m_{i}}=\text{\cctwconfa{i}}$
\end{theorem}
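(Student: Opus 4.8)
The plan is to read $\pred{m_{i}}$ as the set of languages $\{\mathcal{L}^{\testdetneg}(M) \mid M \text{ a } \microcosm{m}_{i}\text{-machine}\}$ and to establish the two inclusions $\cctwconfa{i} \subseteq \pred{m_{i}}$ and $\pred{m_{i}} \subseteq \cctwconfa{i}$ by combining the soundness result (\autoref{soundness}) and the completeness result (\autoref{completeness}) with a standard normalisation of two-way multihead automata. The subtle point is that \autoref{soundness} consumes \emph{acyclic} automata while \autoref{completeness} produces automata recognising through the co-nondeterministic acceptance condition built into the definition of $\cctwconfa{i}$; these two classes of automata bracket $\pred{m_{i}}$ from both sides, and the robustness of the $\cctwconfa{i}$ hierarchy must be used to collapse both brackets onto $\cctwconfa{i}$ at the \emph{same} number of heads, so that no separation of deterministic from nondeterministic resources is implicitly claimed.

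For the inclusion $\cctwconfa{i} \subseteq \pred{m_{i}}$, I would start from a language $L \in \cctwconfa{i}$ recognised by some $\automaton{M} \in \twnfa{i}$, i.e.\ $\automaton{M}$ accepts $\word{w}$ exactly when no computation trace on $\word{w}$ reaches a rejecting state. First I would replace $\automaton{M}$ by an \emph{acyclic} $i$-head automaton recognising the same language: since a looping branch neither accepts nor rejects, one may terminate every sufficiently long branch in the accepting state without ever creating a new rejecting trace, which preserves the co-nondeterministic language and, crucially, keeps the number of heads equal to $i$. Applying \autoref{soundness} to this acyclic automaton then yields a $\microcosm{m}_{i}$-machine $M$ with $\mathcal{L}^{\testdetneg}(M) = L$, whence $L \in \pred{m_{i}}$.

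For the reverse inclusion $\pred{m_{i}} \subseteq \cctwconfa{i}$, I would take a $\microcosm{m}_{i}$-machine $G$ and apply \autoref{completeness}, which produces the $i$-head automaton $\graphauto{G}$ together with the equivalence supplied by \autoref{cyclestoautomaton} and \autoref{orthogonalityandcycles}, stating that $\graphauto{G}$ reaches a rejecting state on $\word{w}$ if and only if $\word{w} \notin \mathcal{L}^{\testdetneg}(G)$. By the very definition of $\cctwconfa{i}$ this means $\graphauto{G}$ recognises $\mathcal{L}^{\testdetneg}(G)$, so $\mathcal{L}^{\testdetneg}(G) \in \cctwconfa{i}$. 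The case $i = \infty$ then follows by a union argument: every $\microcosm{m}_{\infty}$-machine is finite, hence a $\microcosm{m}_{N}$-machine for some $N$ (as $\microcosm{m}_{\infty} = \cup_{N} \microcosm{m}_{N}$), so $\pred{m_{\infty}} = \cup_{N} \pred{m_{N}} = \cup_{N} \cctwconfa{N} = \cctwconfa{\infty}$, matching $\cup_{k} \cctwconfa{k} = \coNLogspace$. The main obstacle I expect is precisely the acyclicity normalisation of the first inclusion: one must turn a nondeterministic two-way $i$-head automaton into a halting one \emph{without} increasing the head count while leaving the co-nondeterministic language unchanged, which calls for a careful loop-detection or configuration-counting argument rather than a naive clock.
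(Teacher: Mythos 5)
Your proposal takes essentially the same route as the paper, whose entire proof is the single sentence ``combining \autoref{soundness} and \autoref{completeness}, we obtain the characterisation.'' The one thing you add --- and it is a genuine addition rather than a deviation --- is the explicit normalisation step bridging the mismatch between \autoref{soundness}, which only consumes \emph{acyclic} $i$-head automata, and the definition of \cctwconfa{i}, which quantifies over arbitrary ones; the paper silently elides this, and your observation that the loop-elimination must be done without increasing the head count (lest the Monien hierarchy be disturbed) correctly identifies where the real residual work lies.
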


In particular, the microcosm $\microcosm{m}_{1}$ characterises the class of regular languages, while the microcosm $\microcosm{m}_{\infty}$ characterises the class \coNLogspace. 

Future work includes the extension of the techniques to other complexity classes. A similar characterisation of the class of polynomial time predicates should be easily obtained following the recent result by the author and coauthors \cite{seiller-ptime}. This should lead to \Ptime and not \coNPtime since the characterisation is based on pushdown automata \cite{cookP}. Following the syntactic characterisation obtained by Baillot \cite{baillot} by interpreting (some) Turing machines as \ELL proofs, one can expect a characterisation of the nondeterministic polynomial time class \coNPtime. As explained in an overview and perspective paper \cite{seiller-towards}, the results will be adapted for deterministic and probabilistic classes.

\subsection{The Logical View} 

As explained above, the set $\pred{m_{i}}$ can be understood both as semantic restrictions over the set of computable predicates in the model of Elementary Linear Logic described by the microcosm $\microcosm{p}$ (\autoref{theoremELL}), or as the set of computable predicates in a model of a modified linear logic lying in between \MALL and \ELL. Future work in this direction includes the understanding of these intermediate logics, and how they can be described syntactically. Let us provide here a first intuition in this regard. One should notice that functorial promotion is implemented by two steps: the first step uses permutations to prevent the interaction of the information encoded in $[0,1]$ during exponentiation; a second step then takes the two copies of $[0,1]$ and encodes them into a single one by using the function $[\cdot,\cdot]$, obtaining a graphing in the image of the exponentiation operation. The microcosms considered here are obtained by removing the latter function, hence preventing this second step. As a consequence, the models allow for limited composition of exponentiated maps: each new composition requires the use of a new copy of $[0,1]$, and disallow to view those as exponentiated objects themselves. As a consequence, the intuition is that the characterisation of \coNLogspace obtained above corresponds to a restriction of linear logic where arbitrary compositions of exponentiated objects is possible but the resulting object cannot be seen as an exponential object. In some manner, the corresponding system should allow for \emph{external} functorial promotion, in the same sense that countable models of set theory allow for \emph{external} bijections between any two sets regardless of their cardinality in the model.

\subsection{The Geometric View} 

As explained in the introduction and not developed yet, the results we obtain are of a geometrical nature. Indeed, the class are here characterised by microcosms which are (in this case) actions of groups on a measured space. Indeed, the microcosm $\microcosm{m}_{i}$ is obtained from the set of translations on $\integerN$ together with the set of maps induced by the action of the set $\mathfrak{G}_{i}$ of permutations over $\{1,\dots,i\}$ onto the space $[0,1]^{i}$. One should notice that the translations will always exist in any other characterisation of complexity classes using the techniques explained in this paper: this is because they are need to interact with the input. Therefore, only the action of the group $\mathfrak{G}_{i}$ is of importance here. Future work will therefore consider how these group actions are related to the characterisations. Since the integer representation is independent from the group action, it is not difficult to convince oneself that, on one hand, any equivalent -- homotopic -- transformation of the space will give rise to the same complexity class. On the other hand, the group actions considered in this paper can be shown to be non-homotopic by using mathematical invariants \cite{gaboriaucost}. Together with the separation result (\autoref{thm:monien}), this lead the author to the conjecture that the converse holds \cite{seiller-towards}, i.e. that non-equivalent group actions yield distinct complexity classes.


%


\bibliographystyle{abbrv}
\bibliography{current-biblio}



\end{document}